\documentclass[11pt]{article}

%\special{papersize=8.5in,11in}  % for dvips

%\usepackage{thmtools}
%\usepackage{thm-restate}
%\usepackage{cleveref}

%% Packages
\usepackage{mathrsfs}           % \mathscr
\usepackage{algorithm}
\usepackage{algorithmic}

\usepackage{enumerate}

\usepackage{amsmath,amssymb}    % AMS
\usepackage{verbatim}           % comment environment
\usepackage{xspace}             % optional space
\usepackage{graphicx,float}     % figures
\usepackage{ifthen,calc}        % macros
\usepackage{textcomp}           % \textcent
\usepackage{fancybox}           % \Ovalbox
\usepackage{hhline}             % \hhline
\usepackage{float}              % \float

%% Space savings
\usepackage[rflt]{floatflt}     % default is vflt
\usepackage{subfigure}
%\renewcommand{\baselinestretch}{1.0}
%\usepackage{draftcopy}

%% PDF
\usepackage{color}
\definecolor{ForestGreen}{rgb}{0.1333,0.5451,0.1333}
%\usepackage[letterpaper,
%            colorlinks,linkcolor=ForestGreen,citecolor=ForestGreen,
%            backref,
%            bookmarks,bookmarksopen,bookmarksnumbered]
%           {hyperref}
\usepackage{thumbpdf}

%\usepackage[margin=1.01in]{geometry}
%% Paper size
%\setpapersize{USletter}
%% Margin: top + headheight + headsep, bottom + footskip (no footheight)
%\setmarginsrb{.75in}{.5in}        % left, top
%             {.75in}{.5in}        % right, bottom
%             {.25in}{.25in}     % headheight, headsep
%             {.25in}{.5in}      % footheight, footskip
%% Margin notes
%\setlength{\marginparwidth}{1in}
%\setlength{\marginparsep}{.05in}

%% Show side comments for drafts (set to 0 to hide)
%\newcommand{\showccc}[0]{0}
%\newcommand{\ccc}[2][nothing]{% comment
%  \ifthenelse{\showccc=0}{}{
%    \ensuremath{^{\Lsh\Rsh}}\marginpar{\raggedright\tiny\textsf{%
%        \ifthenelse{\equal{#1}{nothing}}{}{\textbf{#1}\\}#2}}}}

%% Page Style
%\pagestyle{fancy}
%\renewcommand{\headrulewidth}{0pt}
%\newcounter{hours}\newcounter{minutes}
%\newcommand{\hhmm}{%
%  \setcounter{hours}{\time/60}%
%  \setcounter{minutes}{\time-\value{hours}*60}%
%  \ifthenelse{\value{hours}<10}{0}{}\thehours:%
%\lhead{}
%\chead{}
%\ifthenelse{\showccc=0}{\rhead{}}{\rhead{\today \ [\hhmm]}}
%\lfoot{}
%\cfoot{\thepage}
%\rfoot{}

%% Inter-paragraph space
%% \setlength{\parskip}{\medskipamount}

%% Macros

%% Notations for this paper

\floatstyle{ruled}
\newfloat{algo}{tbp}{lop}%[section]
\floatname{algo}{Algorithm}

%% New operators

%% Floats
\usepackage{float}
\floatstyle{plain}\newfloat{myfig}{t}{figs}[section]
\floatname{myfig}{\textsc{Figure}}
\floatstyle{plain}\newfloat{myalg}{H}{algs}[section]
\floatname{myalg}{}
\setlength{\fboxrule}{0.8pt}    % fbox border should be wider

\usepackage{fullpage}
\usepackage[utf8]{inputenc}
\usepackage[english]{babel}
\usepackage{amsmath,amsthm,amssymb,stackrel}
 \usepackage{todonotes}
\usepackage{url}
 \usepackage{relsize}
 \usepackage{authblk}

\usepackage[authoryear]{natbib}
\setcitestyle{square,aysep={},yysep={;}}
\newcommand{\citeN}[1]{\citet{#1}}
\renewcommand{\cite}{\citep}

\newtheorem{claim}{Claim}
\newtheorem{lemma}{Lemma}
\newtheorem{theorem}{Theorem}

\newtheorem{proposition}{Proposition}

 \newcommand{\SI}{\textsc{Subgraph Isomorphism}\xspace}
  \newcommand{\LIGH}{\textsc{Locally Injective Graph Homomorphism}\xspace}
  \newcommand{\GM}{\textsc{Graph Minor}\xspace}
    \newcommand{\QAP}{\textsc{Quadratic Assignment Problem}\xspace}    
    \newcommand{\TGM}{\textsc{Topological Graph Minor}\xspace}  
   \newcommand{\MDE}{\textsc{Minimum Distortion Embedding}\xspace}
\newcommand{\LGH}{\textsc{List Graph Homomorphism}} 
\newcommand{\cO}{\mathcal{O}}
\newcommand{\Oh}{\cO}
\newcommand{\cOs}{\mathcal{O}^*}
\newcommand{\cL}{\mathcal{L}}

\renewcommand{\deg}{\operatorname{deg}}

\newcommand{\GH}[0]{\textsc{Graph Homomorphism}\xspace}

\def\cqedsymbol{\ifmmode$\lrcorner$\else{\unskip\nobreak\hfil
\penalty50\hskip1em\null\nobreak\hfil$\lrcorner$
\parfillskip=0pt\finalhyphendemerits=0\endgraf}\fi}

\newcommand{\defproblemu}[3]{
  \vspace{2mm}
%  \hline
  \vspace{1mm}
\noindent\fbox{
  \begin{minipage}{0.95\textwidth}
  #1 \\
  {\bf{Input:}} #2  \\
  {\bf{Task:}} #3
  \end{minipage}
  }
%  \vspace{1mm}
%  \hline
  \vspace{2mm}
}

\newcommand{\gr}[1]{\tilde{#1}} % grouping

\title{Tight Lower Bounds on Graph Embedding Problems\thanks{The research leading to these results has received funding from the Government of the Russian Federation (grant 14.Z50.31.0030). The research of Alexander Golovnev is supported by NSF grant 1319051. The research of Alexander Kulikov is also supported by  the grant of the President of Russian Federation  (MK-6550.2015.1). The research of Jakub Pachocki is supported by NSF grant CCF-1065106.
The research of Marek Cygan and Arkadiusz Soca\l{}a is supported by National Science Centre of Poland, Grant Number UMO-2013/09/B/ST6/03136. Preliminary versions of this work were presented at ICALP 2015 and SODA 2016.
}}
\author[1]{Marek~Cygan}
\author[2,5]{Fedor~V.~Fomin}
\author[3,5]{Alexander~Golovnev}
  \author[5]{Alexander~S.~Kulikov}
\author[4]{Ivan~Mihajlin}
\author[6]{Jakub~Pachocki}
\author[1]{Arkadiusz~Soca\l{}a}
 \affil[1]{Institute of Informatics, University of Warsaw, Poland}
  \affil[2]{University of Bergen, Norway}
    \affil[3]{New York University, USA}
      \affil[4]{University of California---San Diego, USA}
        \affil[5]{St.~Petersburg Department of Steklov Institute of Mathematics of the Russian Academy of Sciences, Russia}
           \affil[6]{Carnegie Mellon University, USA}

\date{}

\begin{document}

\maketitle

\begin{abstract}
We prove that unless the Exponential Time Hypothesis (ETH) fails, deciding if there is a homomorphism from graph $G$ to  graph $H$ cannot be done in time $|V(H)|^{o(|V(G)|)}$. We also show an exponential-time reduction from Graph Homomorphism to Subgraph Isomorphism. This rules out (subject to ETH) a possibility of $|V(H)|^{o(|V(H)|)}$-time algorithm deciding if graph $G$ is a subgraph of~$H$. 
For both problems our lower bounds asymptotically match the running time of brute-force algorithms trying all possible mappings of one graph into another. Thus, our work  closes  
 the   gap in the known complexity of these fundamental problems.

Moreover, as a consequence of our reductions conditional lower bounds follow for other related problems such as Locally Injective Homomorphism, 
Graph Minors, Topological Graph Minors, Minimum Distortion Embedding and Quadratic Assignment Problem.
 \end{abstract}

 %!TEX root = si_homs_journal.tex
\section{Introduction}\label{sec:intro}

%\todo[inline]{We discover the relation of ETH about SAT and a number of fundamental embedding and containment problems in graphs. Roughly speaking, we show that in the worst case, the brute-force is unavoidable for all these problems}

We establish tight conditional lower bounds on the complexity of several fundamental graph embedding problems including
\GH, \SI, \GM, \TGM, and \MDE. 
For   given undirected graphs $G$ and $H$, all these problems can be solved in time $n^{\cO(n)}$ by a brute-force algorithm that tries all possible embeddings of $G$ into $H$, where $n$ is the number of vertices in $G$ and $H$.  We show that unless 
 the Exponential Time Hypothesis (ETH) fails, the running time  $n^{\cO(n)}$ is unavoidable. This resolves a number of open problems about graph embeddings that can be found in the literature. We start by  defining embedding problems and providing for each of the problems a brief overview of the related previous results.

%Perhaps the most basic relation between graphs is that of being a subgraph.
\paragraph{\GH}
A {\em homomorphism} $G\to H$ from an undirected graph $G$ to an undirected graph $H$ is a mapping
from the vertex set   $G$ to that of $H$ such that the image of every edge of $G$ is an edge of $H$.  
In other words, there is $G\to H$ if and only if there exists a mapping $g : V(G) \to V(H)$,
such that for every edge $uv \in E(G)$,  we have $g(u)g(v) \in E(H)$.  Then the \GH problem HOM$(G,H)$
 is defined as follows. 
 
\defproblemu{\GH}{Undirected graphs $G$ and $H$. } {Decide whether there is a homomorphism  $G\to H$.}

Many combinatorial structures in $G$, for example cliques, independent sets,  and proper vertex colorings,
may be viewed as graph homomorphisms to a particular graph $H$, see the  book of 
%Hell and Ne\v set\v ril
~\citeN{HellN04}
for a thorough introduction to the topic. 
 It is well-known that  \textsc{Coloring} is a special case of graph homomorphism. More precisely, a graph $G$ can be colored with at most $h$ colors if and only if $G\to K_h$, where $K_h$ is a complete graph on $h$ vertices. Due to this, very often in the literature HOM($G,H$), when $h=|V(H)|$,  is referred as $H$-coloring of~$G$.
It was shown by 
%Feder and Vardi in 
\citeN{FederV98}  that   \textsc{Constraint Satisfaction Problem}  (CSP) can be interpreted  as a homomorphism problem on relational structures, and thus  \textsc{Graph Homomorphism}  encompasses a large family of problems generalizing  \textsc{Coloring}  but less general  than CSP.

Hell and Ne\v set\v ril showed that for every fixed simple graph $H$,
the problem whether there exists a homomorphism from $G$ to $H$  is solvable in
polynomial time if $H$ is bipartite,
and NP-complete if $H$ is not bipartite~\cite{HellN90-On}. 
Since then, algorithms for and the complexity of graph homomorphisms (and homomorphisms between other discrete structures) have been   studied intensively
\cite{Austrin10,BartoKN08,Grohe07,Marx10,Raghavendra08}.

There are two different ways  graph homomorphisms are used to extract useful information about graphs. Let us consider two homomorphisms, from a ``small" graph $F$ into a ``large'' graph $G$ and from a ``large'' graph $G$ into a  ``small" graph $H$, which can be represented by the following formula  (here we borrow  the intuitive description from the  book of 
%Lov{\'a}sz
~\citeN{lovasz2012large})
\[
F\to {\mathlarger{\mathlarger G}} \to H.
\]
Then ``left-homomorphisms" from various small graphs $F$ into $G$ are useful to study the local structure of $G$. For example, if $F$ is a triangle, then the number of  ``left-homomorphisms" from $F$ into $G$ is the number of triangles in graph $G$. 
This type of information is closely related  to sampling, and we refer to the book of 
%Lov{\'a}sz  
\citeN{lovasz2012large} which provides many applications of homomorphisms.  ``Right-homomorphisms" into ``small" different graphs $H$  are related to global properties of graph~$G$. 

The trivial brute-force algorithm solving   ``left-homomorphism" from an $f$-vertex graph $F$ into an $n$-vertex graph $G$ runs in time $2^{\cO(f \log{n})}$:
 we try all possible vertex subsets of $G$ of size at most $f$, which is $n^{\cO(f)}$ and then for each subset try all possible $f^f$ mappings into it from $F$.  Interestingly,  this na\"{\i}ve  algorithm is asymptotically optimal.  Indeed, 
as it was shown by 
 %Chen et al. 
\citeN{Chen20061346}, assuming Exponential Time Hypothesis, there is no $g(k)n^{o(k)}$ time algorithm deciding if an input $n$-vertex graph $G$ contains a clique of size at least $k$,  for any computable function $g$.  Since this is a very special case of  \GH     with  $F$ being a clique of size $k$, the result of  Chen et al. rules out algorithms for  \GH of running time $g(f)2^{o(f\log{n})}$, from $F$ to $G$, when the number of vertices $f$ in $F$ is significantly smaller than the number of vertices $n$ in $G$. 
 
 %Brute-force for  ``right-homomorphism" HOM$(G,H)$, checking all possible mappings from $G$ into $H$, also runs in time $2^{\cO(n \log{h})}$, where $h$ is the number of vertices in $H$.
 %  However, prior to our work there were no results 
 %indicating that asymptotically better algorithms of running time $2^{o(n\log{n})}$, are highly unlikely. 
%
%
%
%
 
 The  interest in  ``right-homomorphisms"   is due to the recent developments in the area of exact exponential algorithms for \textsc{Coloring} and \textsc{$2$-CSP} (\textsc{CSP} where all constraints have arity at most~$2$) problems. 
The area of exact exponential algorithms is about solving intractable problems significantly faster than the trivial exhaustive search, though still in exponential time \cite{FominKratschbook10}. For example, as for  \GH, a na\"{\i}ve  brute-force algorithm for coloring an $n$-vertex  graph $G$ in $h$ colors is to try for every vertex a possible color, resulting in the running time $\cOs(h^n)=2^{\cO(n\log{h})}$.\footnote{$\cOs(\cdot)$ hides polynomial factors in the input length. Most of the algorithms considered in this paper take graphs $G$ and $H$ as an input. By saying that such an algorithm has a running time $\cOs(f(G,H))$, we mean that the running time is upper bounded by
$(|V(G)|+|E(G)|+|V(H)|+|E(H)|)^{\cO(1)}\cdot f(G,H)$.} 
Since $h$ can be of order $\Omega(n)$, the brute-force algorithm computing the chromatic number  runs in time $2^{\cO(n\log{n})}$. It was already observed in 1970s by 
%Lawler   
\citeN{Lawler76} that the brute-force for  the  \textsc{Coloring}  problem can be  beaten by making use of dynamic programming over maximal independent sets  resulting in single-exponential  running time  $\cOs((1+\sqrt[3]{3})^n)=\cO( 2.45^n)$.
  Almost 30 years later 
%Bj\"{o}rklund et al. 
%
 \citeN{BjorklundHK2009-Se} succeeded to reduce the running time to $\cOs( 2^n)$.  And as we observed already,  for $H$-coloring, the brute-force algorithm solving 
 $H$-coloring runs in time $2^{\cO(n\log{h})}$. In spite of  all the similarities between graph coloring and homomorphism, no substantially faster algorithm was known and it was an open question in the area of exact algorithms if there is a single-exponential algorithm solving $H$-coloring in time  $2^{\cO(n+h)}$ \cite{FHK2007,Rzazewski14,Wahlst10,W2011}, see also \cite[Chapter 12]{FominKratschbook10}.

On the other hand, \GH is a special case of \textsc{2-CSP} with $n$  variables and domain of size $h$.  It was shown by 
%Traxler  
\citeN{T2008} that unless the Exponential Time Hypothesis fails, there is no algorithm solving 
\textsc{2-CSP} with $n$ variables and domain of size $h$ in time  $h^{o(n)}=2^{o(n\log h)}$. This excludes (up to ETH)  the existence of a single-exponential $c^n$ time algorithm for some constant $c>1$ for \textsc{2-CSP}.

Another interesting variant of \GH is related to  graph labelings. 
A homomorphism $f \colon  G \to H$ is called \emph{locally injective} if for every vertex 
$u \in V (G)$, its neighborhood is mapped injectively into the neighborhood of $f (u)$ in $H$, i.e., if every two vertices with a common neighbor in $G$ are mapped onto distinct vertices in $H$. 
 
\defproblemu{\LIGH}{Undirected graphs $G$ and $H$. } {Decide whether there is a locally injective homomorphism  $G\to H$.}

As graph homomorphism generalizes graph coloring, locally injective graph homomorphism can be seen as a generalization of graph 
 distance constrained labelings. An $L(2, 1)$-labeling of a graph $G$ is a mapping from $V(G)$ into the nonnegative integers such that the 
labels assigned to vertices at distance $2$ are different 
 while labels 
 assigned to adjacent vertices differ by at least $2$. This problem was studied intensively in combinatorics and algorithms, see, e.g.,
%Griggs and Yeh 
\citeN{Griggs:1992uq} and  
%Fiala et al.
 \citeN{FialaGK08}. 
 Fiala and  Kratochv\'{\i}l suggested the following generalization of $L(2, 1)$-labeling, we refer  \cite{FialaK08} for the survey. For graphs $G$ and $H$, an $H(2,1)$-labeling is a mapping $f : V(G)\to V(H)$ such that for every pair of distinct adjacent vertices $u,v\in V(G)$, images  $f(u)$ $f(v)$ are distinct and nonadjacent in $H$. Moreover, if the distance between $u$ and $v$ in $G$ is two, then $f(u)\neq f(v)$. It is easy to see that a graph $G$ has an $L(2,1)$-labeling  with maximum label at most $k$ if and only if there is an $H(2,1)$-labeling for $H$ being a $k$-vertex path. Then the following is known, see for example \cite{FialaK08}: there is an $H(2,1)$-labeling of a graph $G$ if and only if there is a locally injective homomorphism from  $G$ to the complement of $H$.
 
 Several single-exponential algorithms for  $L(2,1)$-labeling can be found in the literature, the most recent algorithm is due to  
%Junosza{-}Szaniawski et al.    
\citeN{Junosza-SzaniawskiKLRR13} which runs in time $\cO(2.6488^n)$. For $H(2,1)$-labeling, or equivalently for locally injective homomorphisms,   single-exponential algorithms were known only for special cases when the maximum degree of $H$ is bounded  \cite{HavetKKKL11} or  when the bandwidth of the complement of $H$ is bounded \cite{Rzazewski14}. 
 %The following theorem explains why no such algorithms were found for arbitrary graph $H$. 

\paragraph{\SI}
We say that an undirected $G$ is a \emph{subgraph} of $H$ if one can remove some edges and vertices
of $H$, so that what remains is isomorphic to $G$. In other words, $G$ is a subgraph of $H$ if and only if there exists an injective mapping $g : V(G) \to V(H)$,
such that for each edge $uv \in E(G)$,  $g(u)g(v) \in E(H)$.
We define

\defproblemu{\SI}{Undirected graphs $G$ and $H$. } {Decide whether $G$ is a subgraph of $H$.}

\SI is an important and very general problem. %, having the form of a pattern matching---we will call $G$ the \emph{pattern graph} and $H$ the \emph{host graph}.
  Several flagship graph problems can be viewed as instances of \SI:
\begin{itemize}
  \item {\sc Hamiltonicity}$(G)$: Is $C_n$ (a cycle with $n$ vertices) a subgraph of $G$?
  \item {\sc Clique}$(G,k)$: Is $K_k$ a subgraph of $G$?
  \item {\sc 3-Coloring}$(G)$: Is $G$ a subgraph of $K_{n,n,n}$, a tripartite graph with $n$ vertices in each of its three independent sets?
%  \item {\sc VertexCover}$(G,k)$: is $G$ a subgraph of $H$, $H$ being a full join between a clique of size $k$ and an independent set of size $n-k$?
\item   {\sc Bandwidth}$(G,k)$: Is $G$ a subgraph of $P_n^k$ (a $k$-th power of an $n$-vertex path)?
\end{itemize}
%One can continue showing the richness of \SI by simple linear reductions from {\sc Bandwidth}, {\sc Set Packing} and several other problems.
 
All of the mentioned problems are NP-complete, and the best known algorithms
for all the listed special cases work in exponential time.
In fact, all those problems are well-studied from the exact exponential algorithms perspective~\cite{BeiEpp05,hamiltonicity,mis,bw,Feige00,HeldKarp62,Lawler76,Robson86,TarjanTrojanowski77},
where the goal is to obtain an algorithm of running time $\Oh(c^n)$ for the smallest possible value of $c$.
Furthermore, the \SI problem was very extensively studied from the viewpoint of fixed parameter tractability, see~\cite{michal} for a discussion of 19 different possible parametrizations.
All the mentioned special cases of \SI admit $\Oh(c^n)$ time algorithms, by using either branching,
inclusion-exclusion principle, or dynamic programming.
On the other hand, a simple exhaustive search for the \SI problem---numerating all possible mappings from the pattern graph to the host graph---runs in $2^{\Oh(n \log n)}$ time, 
where $n$ is the total number of vertices of the host graph and pattern graph.

Therefore, a natural question is whether \SI admits an $\Oh(c^n)$ time algorithm.
This was repeatedly posed as an open problem~\cite{bedlewo,counting-homo,dagstuhl-1,dagstuhl-2}. 
In particular, in the monograph of 
%Fomin and Kratsch
~\citeN{FominKratschbook10}   the existence of $\Oh(c^n)$ time algorithm
for \SI  was put among the few questions in the open problems section.

\SI is  a special case of \QAP,  which is

\defproblemu{\QAP (QAP)}{$n\times n$ matrices $A=(a_{ij})$ and $B=(b_{ij})$ with real entries. } {Find a permutation $\pi$ minimizing  $\sum_{i=1}^n \sum_{j=1}^n a_{\pi(i)\pi(j)}b_{ij}$.}
%
%where we are given two $n\times n$ matrices $A=(a_{ij})$ and $B=(b_{ij})$ with real entries. The task is to find a permutation $\pi$ minimizing  $\sum_{i=1}^n \sum_{j=1}^n a_{\pi(i)\pi(j)}b_{ij}$. 
 
 Indeed, $G$ is a subgraph of $H$ if and only if for the instance of  QAP with 
$A$ and $B$ being adjacency matrices of $G$ and the complement of $H$ the optimum value is $0$\footnote{If $G$ has smaller number of vertices than $H$, then it should be first padded with isolated vertices to make the number of vertices in both graphs equal.}.
Problem 7.6 in the  influential survey of Woeginger  on exact algorithms \cite{Woeginger03} is to prove that QAP cannot be solved in time $\cO(c^n)$ for any fixed value $c$ (under some reasonable assumption).

\paragraph{\GM}
For a graph $G$ and an edge $uv \in G$,
we define the operation of \emph{contracting  edge $uv$} as follows:   we delete vertices $u$ and $v$ from $G$, and add  a new vertex $w_{uv}$ adjacent to   all vertices that $u$ or $v$ was adjacent to in $G$.
We say that a graph $G$ is a \emph{minor} of $H$,  if $G$ can be obtained from some subgraph of $H$ by a series of edge contractions.
Equivalently, we may say that $G$ is a minor of $H$ if $G$ can be obtained from $H$ itself
by a series of edge deletions, edge contractions and vertex deletions.

\defproblemu{\GM}{Undirected graphs $G$ and $H$. } {Decide whether $G$ is a minor of $H$.}

\GM is a fundamental  problem in graph theory and graph algorithms.  
 By the  theorem of 
%Robertson and Seymour 
\citeN{RobertsonS-GMXIII},  there exists a computable function $f$ and an algorithm that, for  given graphs $G$ and $H$,
checks in time  $f(G) |V(H)|^3$   whether $G$ is a minor of $H$. 
However, when the size of the graph $G$ is not a constant, nothing beyond a brute-force algorithm trying all possible partitions of a vertex set of $H$ was known.  %The existence of a better algorithm for this 

Related notion of graph embedding is the notion of topological minor. We say that a graph $G$ is a subdivision of a graph $H$ if $G$ can be obtained from $H$ by contracting only edges incident with vertices of degree two. In other words, $G$ is obtained from $H$ by replacing edges with paths. 
A~graph $G$ is called a \emph{topological minor} of a graph $H$  if a subdivision of $G$ is isomorphic to a subgraph of $H$.

\defproblemu{\TGM}{Undirected graphs $G$ and $H$. } {Decide whether $G$ is a topological minor of $H$.}

  %Lingas and Wahlen 
\citeN{LingasW09}  gave an algorithm of running time  $\cOs({{n}\choose{p}} p! 2^{n-p})$ solving \TGM for 
  $n$-vertex graph $H$ and $p$-vertex graph $G$.

\paragraph{\MDE}
Given an undirected connected graph $G$ with the vertex set $V(G)$ and the edge set $E(G)$, the {\em graph metric} of $G$ is $M(G) = (V(G),D_G)$, where the
distance function $D_G$ is the shortest path distance between $u$ and $v$ for every pair of vertices $u,v \in V(G)$. Given a graph metric $M$
and another metric space $M'$ with distance functions $D$ and $D'$, a mapping $f:M \rightarrow M'$ is called an {\em embedding} of $M$ into $M'$. The mapping $f$  is \emph{non-contracting}, if 
 %{\em contraction} $c_f$ and {\em expansion} $e_f$ if 
  for every pair of points $p,q$ in $M$,
$D(p,q) \leq D'(f(p),f(q))$. The \emph{distortion} of embedding $f$ is the  minimum number  $d_f$ such that 
$D(p,q) \cdot d_f \geq D'(f(p),f(q))$.  We define 
%We say that $f$ is \emph{non-contracting} if $c_f$ is at most $1$. A non-contracting
%mapping $f$ has \emph{distortion} $d$ if $e_f$ is at most $d$.

\defproblemu{\MDE}{Undirected graphs $G$ and $H$. } {Find a non-contracting embedding of 
 $G$ into $H$ of minimum distortion.}
 
 Most of exact algorithms for \MDE deal with a special case when the host graph $H$ is a path or a tree of bounded degree  \cite{BadoiuCIS05,BadoiuDGRRRS05,bw,FellowsFLLRS13,FominLS11,KenyonRS09}. 
 In particular, an optimal-distortion embedding into a line can be found in time   $2^{\cO(n)}$ \cite{bw,FominLS11}.

 \paragraph{Our results.}
In this paper we show that from the algorithmic perspective,   the behavior of  ``right-homomorphism"  is, unfortunately, much closer to  \textsc{2-CSP} than to 
\textsc{Coloring}.  This result will also imply similar lower bounds for many other graph embedding and containment problems. 
All  lower bounds obtained in this paper are conditional, they hold unless 
 the Exponential Time Hypothesis
 %of Impagliazzo, Paturi and Zane~
\cite{eth1,eth2} fails. ETH
is an established assumption; many interesting lower bounds have been found under this hypothesis  (see~\cite{CFKLMPPS2014,eth-survey} for surveys).
We formulate  ETH in the next section.

The first main result of this paper is the following theorem, which 
excludes  (up to ETH) resolvability of  HOM$(G,H)$ in time
$2^{o({n\log h})}$, thus resolving the open qestion from  \cite{FHK2007,Rzazewski14,Wahlst10,W2011}.

\newsavebox{\boxmainboundone}
\sbox\boxmainboundone{\parbox{\textwidth}{
\begin{theorem}\label{main:theorem_homs} 
Unless ETH fails, for any constant $d>0$ there exists a constant $c=c(d)>0$ such that
for any non-decreasing
function
$3\le h(n)\le n^d$,
there is no algorithm solving 
\GH
 from an $n$-vertex graph $G$ to a graph $H$ with at most $h(n)$ vertices in time 
\begin{equation}\label{eq:vert}
\cO(2^{cn\log{h(n)}}) \,.
\end{equation}
\end{theorem}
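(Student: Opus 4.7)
The proof proceeds by a reduction from $3$-SAT, parameterized so that a hypothetical algorithm of the forbidden form would contradict ETH. Given a $3$-SAT formula with $N$ variables, by the Sparsification Lemma we may assume it has $O(N)$ clauses and that for some fixed $\delta > 0$ no $2^{\delta N}$ algorithm for it exists. Given the polynomial bound $3 \le h(n) \le n^d$, I would fix a block size $k = \lfloor \log h(n) \rfloor$, partition the Boolean variables into $n = \Theta(N/k)$ blocks of size $k$, and view each block as a single super-variable ranging over a domain of size $2^k \le h(n)$. Each original clause becomes a constraint of arity at most three over at most three of these super-variables, yielding a constraint satisfaction problem with $n$ super-variables, $O(N)$ constraints, and domain size at most $h(n)$.

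Next this CSP is encoded as a \GH instance $(G,H)$. The graph $H$ consists of a ``palette'' of at most $h(n)$ domain vertices, plus a fixed-size constraint-verification gadget built on top of them. The graph $G$ has a distinguished variable vertex $v_i$ for each super-variable, and for each clause of the CSP it contains a small clause gadget incident to the at most three variable vertices corresponding to that clause. The design goal is that a homomorphism $G \to H$ must map each $v_i$ into the palette, thereby assigning it a domain value, and that the clause gadgets admit a consistent extension of the homomorphism if and only if all CSP constraints are satisfied.

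The central technical challenge, and the step I expect to require the most care, is packing all constraints into an $H$ of size $O(h(n))$ while also keeping $|V(G)| = O(n)$, since naively each of the $\Theta(N)$ clauses would require its own ``type'' vertex in $H$ and at least one dedicated vertex in $G$. The plan is to make the constraint-verification gadget in $H$ into a universal object that verifies any constraint, with the particular constraint selected by a short address encoded in the clause gadget of $G$. Such an address can be realized as a short rooted substructure of depth $O(\log h(n))$ that forces the clause gadget to be mapped to the correct branch of the verification gadget in $H$; internal vertices of this addressing substructure are shared across all clauses, so that the total contribution to $|V(H)|$ stays $O(h(n))$. On the $G$ side, clause gadgets share their connections into $H$ through the variable vertices themselves and through common routing structures, so that after the grouping step they contribute only $O(n)$ vertices in total.

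Finally I would verify the parameter balance: $|V(G)| = O(n)$ with $n = \Theta(N/\log h(n))$ and $|V(H)| \le h(n)$. A hypothetical algorithm running in time $\cO(2^{cn\log h(n)})$ would then decide the $3$-SAT instance in time $2^{O(cN)}$, so taking the constant $c = c(d)$ small enough relative to $\delta$ contradicts ETH and completes the proof.
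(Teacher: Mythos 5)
Your top-level parameter balancing is correct and matches the paper's: group into blocks of size roughly $\log h(n)$ so that an algorithm running in $2^{cn\log h(n)}$ time becomes a $2^{O(cN)}$-time algorithm for the source problem, contradicting ETH for $c$ small enough. However, the realization of the reduction has several gaps that are not cosmetic.

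First, there is a counting problem you identify but do not actually resolve. After sparsification the formula has $\Theta(N)$ clauses, and with $n=\Theta(N/\log h)$ super-variables you have $\Theta(n\log h)$ clauses. Each ternary clause needs at least one auxiliary vertex in $G$ to break the arity-$3$ constraint into the binary adjacency constraints that \GH{} can express, so the naive construction gives $|V(G)|=\Theta(n\log h)$, not $O(n)$. Plugging this into the hypothetical $h^{c|V(G)|}$ algorithm yields $2^{cn(\log h)^2}$, which is no longer $2^{O(cN)}$, and the contradiction with ETH evaporates. You wave at a ``grouping step'' for the clause gadgets but never describe one, and it is not clear how to group clause gadgets without re-encountering the same difficulties the variable grouping was meant to solve. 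The paper avoids this entirely by starting from \textsc{$3$-Coloring} on bounded-degree graphs (Lemma~\ref{lemma:3col}) rather than $3$-SAT: the constraints are binary and all of the same kind (``different''), and after the balanced-coloring grouping (Lemmata~\ref{lemma:coloring} and~\ref{lem:group}) there is exactly one edge of $G$ between any two buckets, so no per-constraint gadget vertices are needed in $G'$.

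Second, your ``universal verification gadget plus addressing'' is where the real work would have to happen, and as described it conflicts with what \GH{} can enforce. In \GH{} every edge of $G$ imposes the \emph{same} constraint, namely adjacency in $H$; there are no per-vertex lists, so nothing prevents an addressing substructure (or a variable vertex $v_i$, for that matter) from being mapped to an unintended part of $H$. The paper confronts exactly this issue head-on: it first reduces to \LGH{} (Lemma~\ref{lemma:3coltolisthom}), where lists can encode ``this bucket maps only to vertices with its own label,'' and then removes the lists in Lemma~\ref{lemma:lhomtohom} with the $D$/$T_k$/$A_h$ anchor gadgets whose large cliques $K_{h+3}$ can only be mapped to themselves and which therefore rigidify the homomorphism enough to simulate lists. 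Without an explicit mechanism of this kind, your claim that the clause gadget is ``forced to be mapped to the correct branch'' is unsupported, and I do not see how to supply it without essentially reinventing the list-removal step. Moreover, even granting rigidity, a per-clause verification branch must distinguish among the $(2^k)^3 = h^3$ possible triples of super-variable values, and it is not clear such a branch can be realized within a fixed-size portion of an $H$ that has only $h$ vertices total.

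In short, your blind proposal captures the correct outer shell of the argument, but the two steps that carry the technical weight of the paper---a compact binary encoding of the constraints via balanced coloring and grouping, and the list-to-plain-homomorphism anchoring gadgets---are precisely the steps your sketch leaves as unresolved black boxes.
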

}}
\noindent\usebox{\boxmainboundone}

Let us remark that 
in order to obtain more general results, in all lower bounds proven in this paper we assume implicitly that the number $h$ of vertices of the graph $H$ is a function of the number $n$ of the vertices of the graph~$G$. At the same time, to exclude some pathological cases we assume that the function $h(n)$ is ``reasonable'' meaning that it is non-decreasing and time-constructible.

With a tiny modification,  the proof of Theorem~\ref{main:theorem_homs} can be adapted to show  a similar lower  bound  for \LIGH.

\newsavebox{\boxlocalbound}
\sbox\boxlocalbound{\parbox{\textwidth}{
\begin{theorem}\label{main:theorem_local_homs}
 Unless ETH fails, for any constant $d>0$ there exists a constant $c=c(d)>0$ such that
for any non-decreasing
function
$3\le h(n)\le n^d$,
there is no algorithm deciding if there is a locally injective homomorphism from an $n$-vertex graph $G$ to a graph $H$ 
with at most $h(n)$ vertices in time 
\( \cO(2^{cn\log{h(n)}}) \,. \)
\end{theorem}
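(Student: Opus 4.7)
The plan is to reuse, essentially verbatim, the reduction underlying Theorem~\ref{main:theorem_homs}, and show that a small local modification of that reduction produces instances in which any homomorphism $G\to H$ is automatically locally injective (and conversely). Concretely, I would start from the graphs $(G,H)$ produced by the proof of Theorem~\ref{main:theorem_homs} from an ETH-hard source, and describe a size-preserving transformation $(G,H)\mapsto (G',H')$ such that $G\to H$ admits a homomorphism if and only if $G'\to H'$ admits a locally injective homomorphism, with $|V(G')|=\Oh(|V(G)|+\mathrm{poly}(h))$ and $|V(H')|=\Oh(h)$. Since $h(n)\le n^d$, any such polynomial blow-up is absorbed by replacing the constant $c(d)$ with a smaller constant $c'(d)$, so a hypothetical $\Oh(2^{c'n\log h(n)})$-time algorithm for \LIGH would yield an algorithm contradicting Theorem~\ref{main:theorem_homs}.

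To enforce local injectivity, the natural trick is to attach a \emph{unique fingerprint} to every target vertex. Number the vertices of $H$ as $v_1,\dots,v_h$ and, for each $v_i$, attach a private pendant gadget $F_i$ whose isomorphism type (say, a path of length $i$, or a small tree uniquely encoding $i$) is distinct from $F_j$ for $j\neq i$ and is chosen so that no $F_i$ admits a homomorphism into any $F_j$ with $j\neq i$. Mirror this on the source side: for every vertex $u\in V(G)$ and every neighbor slot of $u$, attach an identical copy of the disjoint union of all $F_i$, with an edge forcing the image of that slot to sit on some $v_i$. This anchors each neighbor of $u$ to a specific fingerprint. A careful enforcement then guarantees that two distinct neighbors of $u$ in $G$ cannot share an image in $H$, because the corresponding fingerprint gadgets would have to be simultaneously embedded in overlapping neighborhoods of $H$, which the pendant structure forbids.

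The key properties to verify are: (i) \textbf{soundness} --- any locally injective homomorphism $G'\to H'$ restricts to a homomorphism $G\to H$ realizing the satisfying assignment of the source instance; (ii) \textbf{completeness} --- any satisfying assignment of the source, viewed as a homomorphism $G\to H$, extends to a locally injective homomorphism $G'\to H'$ by mapping each fingerprint $F_i$ in $G'$ to its unique target copy in $H'$; (iii) \textbf{size control} --- the total number of vertices added is $\Oh(h^2)$ on each side, which in particular does not disturb the exponent $n\log h(n)$ up to a constant factor.

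The main obstacle I anticipate is property (iii) together with the interaction between the fingerprint gadgets and the gadgets already present in the Theorem~\ref{main:theorem_homs} reduction: the pendant vertices of $F_i$ may acquire their own neighborhoods in $G'$, and one must ensure that local injectivity at these new vertices does not create additional constraints that rule out the intended homomorphism. This is typically handled by making the $F_i$ trees with leaves of low multiplicity and ensuring that every internal vertex of $F_i$ has a neighborhood that already embeds uniquely under the intended map, so no parasitic conflict arises. Once this is done, the lower bound transfers from Theorem~\ref{main:theorem_homs} to \LIGH with the same functional form $2^{cn\log h(n)}$, yielding Theorem~\ref{main:theorem_local_homs}.
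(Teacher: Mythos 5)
Your proposal and the paper diverge sharply in strategy. The paper's proof of Theorem~\ref{main:theorem_local_homs} introduces \emph{no new gadgetry at all}: it simply observes that the reduction already constructed for Theorem~\ref{main:theorem_homs} produces instances in which every list homomorphism is automatically locally injective. The reason is built into Lemma~\ref{lemma:3coltolisthom}: the grouping of Lemma~\ref{lem:group} comes with a proper coloring of the \emph{square} of $G'$, so any two distinct neighbors of a bucket $B$ carry distinct labels, and the list constraints only allow a bucket to map to a target vertex $(R,l)$ whose label component $l$ matches the bucket's own label. Consequently two distinct neighbors of a bucket can never share an image, which is exactly local injectivity. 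The proof of Lemma~\ref{lemma:lhomtohom} already established that the $T_h$ and $A_h$ gadgets are mapped by an isomorphism onto themselves, so the same conclusion holds there, and the lower bound follows with the exact parameter calculations of Theorem~\ref{main:theorem_homs}, unchanged. Recognizing that the existing construction already has the desired property is the entire content of the proof.

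Your alternative route --- attaching per-vertex fingerprint gadgets to force local injectivity --- is not only more complicated than necessary but, as stated, has concrete gaps. First, the proposed fingerprints (paths or small trees of distinct sizes) do \emph{not} satisfy the claimed non-mapping property: a path of length $i$ admits a locally injective homomorphism into a path of length $j$ for every $j\ge i$ (the inclusion map), so fingerprints of different sizes cannot be distinguished in the direction you need. Second, attaching "an identical copy of the disjoint union of all $F_i$" for \emph{every neighbor slot of every vertex of $G$} adds on the order of $|E(G)|\cdot h^2$ vertices to the source graph, which is far beyond the $\Oh(h^2)$ budget you claim and would disturb the exponent $n\log h(n)$. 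Third, and most fundamentally, forcing each target vertex to carry a unique fingerprint that must be matched tends to make the intended map vertex-disjoint --- i.e.\ it pushes toward a \SI-style requirement rather than local injectivity --- which would break completeness, since the intended homomorphism $G\to H$ in the reduction is genuinely many-to-one. The paper sidesteps all of these difficulties by observing that no modification is needed in the first place.
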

 }}
\noindent\usebox{\boxlocalbound}

%By making use of Theorem~\ref{main:theorem_homs}, we prove 
The second main result of this paper is about \SI, resolving the open question asked in ~\cite{bedlewo,counting-homo,dagstuhl-1,dagstuhl-2,FominKratschbook10}.

\begin{theorem}\label{main:theorem_SI} 
Unless ETH fails,
there is no algorithm   solving \SI for graphs $G$ and $H$ in time $2^{o(n \log n)}$, where $n=|V(G)|+|V(H)|$.
\end{theorem}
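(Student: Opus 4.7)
The plan is to derive Theorem~\ref{main:theorem_SI} by reducing the Graph Homomorphism lower bound of Theorem~\ref{main:theorem_homs} to Subgraph Isomorphism. I would invoke Theorem~\ref{main:theorem_homs} with $h(n) = n$, which rules out (under ETH) any $2^{o(n \log n)}$ algorithm for HOM$(G, H)$ when $|V(G)| = n$ and $|V(H)| \le n$. The goal is then to construct a reduction producing an SI instance $(G', H')$ of total size $N = |V(G')| + |V(H')|$ with $N \log N = O(n \log n)$, so that any $2^{o(N \log N)}$ SI algorithm would transfer back to a $2^{o(n \log n)}$ HOM algorithm, contradicting Theorem~\ref{main:theorem_homs}.

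A natural first try is the \emph{independent-set blowup}: set $G' := G$ and build $H'$ by replacing each vertex $v \in V(H)$ with an independent set of $n$ copies $v^{(1)}, \ldots, v^{(n)}$ (a ``column''), and installing a complete bipartite graph between the columns of $v$ and $w$ whenever $vw \in E(H)$. Correctness is easy to verify: a homomorphism $f: G \to H$ lifts to the injection $u \mapsto f(u)^{(\sigma(u))}$ for any bijection $\sigma: V(G) \to [n]$; conversely, projecting away the column index of any subgraph injection $G' \hookrightarrow H'$ yields a homomorphism, since edges of $H'$ lie only between columns corresponding to edges of $H$. Unfortunately, this naive construction has $|V(H')| = \Theta(n^2)$, so $N \log N = \Theta(n^2 \log n)$, which is far too large --- a $2^{o(N \log N)}$ SI algorithm would yield only a $2^{o(n^2 \log n)}$ HOM algorithm, too weak to contradict Theorem~\ref{main:theorem_homs}.

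The technical heart of the proof, and its main obstacle, is therefore to replace the naive blowup by a compact reduction that still enforces injectivity while producing an SI instance of size $N = O(n)$ (or at most $\tilde{O}(n)$). Two promising directions are: (i) tailoring the reduction to the specific CSP-derived structure of the hard HOM instances produced by the proof of Theorem~\ref{main:theorem_homs}, so that ``index'' gadgets can be shared across columns of $H'$ rather than duplicated per $H$-vertex; and (ii) exploiting the fact that this is permitted to be an \emph{exponential-time} reduction --- one may spend up to $2^{o(n \log n)}$ preprocessing time to partially enumerate homomorphism fragments over sub-blocks of $V(G)$ and compile their pairwise compatibility into a compact SI instance. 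Once such a compact reduction is established, the standard parameter-bookkeeping argument combining the reduction with the hypothetical $2^{o(N \log N)}$ SI algorithm yields the promised contradiction with Theorem~\ref{main:theorem_homs} under the choice $h(n) = n$, completing the proof.
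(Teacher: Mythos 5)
Your high-level plan (reduce the HOM lower bound to SI, fix the blowup problem) is the right one, and your analysis of why the naive ``replicate every $H$-vertex $n$ times'' construction fails ($|V(H')| = \Theta(n^2)$, hence $N\log N = \Theta(n^2\log n)$) is exactly the correct diagnosis. But you stop precisely where the actual proof begins: you never produce the compact reduction, and neither of the two directions you sketch is what makes the argument go through.

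The missing idea is to abandon the goal of a \emph{single} compact SI instance and instead build a \emph{disjunction of exponentially many} small ones. Concretely: a homomorphism $f:G\to H$ induces a multiplicity profile $(a_v)_{v\in V(H)}$ with $a_v = |f^{-1}(v)|$ and $\sum_v a_v = |V(G)|$. There are at most $\binom{|V(G)|+|V(H)|-1}{|V(H)|-1}\le 2^n$ such profiles; enumerate them all. For each profile, set $G' := G$ and let $H'$ be obtained from $H$ by replicating vertex $v$ exactly $a_v$ times (as an independent set, with edges to all copies of $w$ whenever $vw\in E(H)$). Now $|V(H')| = |V(G')| = |V(G)|$, so each SI instance has $N = O(n)$ vertices, and one checks (by the same lift/project argument you gave for the naive version) that $(G,H)$ is a yes-instance of HOM iff at least one of the $2^n$ SI instances is a yes-instance. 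A hypothetical $2^{o(N\log N)}$ SI algorithm, run on all $2^n$ instances, then decides HOM in time $2^n \cdot 2^{o(n\log n)} = 2^{o(n\log n)}$, contradicting Theorem~\ref{main:theorem_homs}. Your direction (ii) gestures at allowing exponential preprocessing, but it still aims at ``compile\dots into a compact SI instance'' (singular), and your direction (i) is not needed at all --- the reduction works for arbitrary HOM instances, with no appeal to the CSP-derived structure of the hard ones. As written, your argument has a genuine hole at its central step.
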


Theorem~\ref{main:theorem_SI} implies that QAP cannot be solved in time   $2^{o(n \log n)}$ unless ETH fails and hence provides the answer to the open problem of 
%Woeginger 
\citeN{Woeginger03}.

An important feature of our proof is that it rules out solvability of \SI in time $2^{o(n \log n)}$ even for the special case  when  $|V(G)|=|V(H)|=n$. 
Since in this special case a graph $G$ is a (topological) minor of $H$ if and only if $G$ is a subgraph of $H$. Thus the case of \GM and \TGM  when  $|V(G)|=|V(H)|=n$ cannot be resolved in time  $2^{o(n \log n)}$ as well. Similar arguments work for various modifications of \GM like \textsc{Shallow Graph Minor}, etc. 

To see how the bound on \SI yields the bound on \MDE, we observe that 
an $n$-vertex graph $G$ admits a non-contracting embedding of distortion $1$ into an $n$-vertex graph $H$ if and only if $H$ is a subgraph of $G$.

\paragraph{Methods}
  To establish lower bounds for graph homomorhisms,   we proceed in two steps. 
 First we obtain lower bounds for   \textsc{List Graph Homomorphism} by reducing it to the $3$-coloring problem on graphs of bounded degree. More precisely, for a given graph $G$ with vertices of small degrees, we construct an instance $(G',H')$ 
 of  \textsc{List Graph Homomorphism}, such that $G$ is $3$-colorable if and only if there exists a list homomorphism from $G'$ to $H'$. Moreover, our construction guarantees that a ``fast" algorithm for list homomorphism 
%parameterized by the number of vertices, size of a vertex cover or the chromatic number, 
implies an algorithm for $3$-coloring violating ETH.
 The reduction is based on a  ``grouping" technique, however, to do the required grouping we need a trick exploiting the condition that $G$ has a bounded maximum vertex degree and thus can be colored in a bounded number of colors in polynomial time. In the second step of reductions we proceed from list homomorphisms to normal homomorphisms. Here we need specific  gadgets  with a property that 
  any homomorphism from such a graph to itself preserves an order
  of its specific structures.

  The remaining part of the paper is organized as follows. Section~\ref{sec:prelim} contains all necessary definitions. In Section~\ref{sec:lemmata} we give technical lemmata and reductions which are used to prove lower bounds for the \GH{} problem in Section~\ref{sec:hom} and for the \SI{} in Section~\ref{sec:subiso}. We conclude with some open problems in Section~\ref{sec:conop}.

%  \todo[inline]{to finish}
%  
%  In Section~\ref{sec:prelim} we give all the necessary definitions.
%  Section~\ref{sec:reductions} contains all the necessary reductions which are used to prove lower bounds for the \textsc{Graph Homomorphism} problem in Section~\ref{sec:lowerbounds}. 
%

%

\section{Preliminaries}\label{sec:prelim}

\paragraph{Graphs}
We consider simple undirected graphs, where $V(G)$ denotes the set
of vertices and $E(G)$ denotes the set of edges of a graph $G$.
For a given subset $S$ of $V(G)$, $G[S]$ denotes the subgraph of $G$ induced by $S$,
and $G-S$ denotes the graph $G[V(G)\setminus S]$.
A vertex set $S$ of $G$ is an {\em independent set} if $G[S]$ is a graph
with no edges, and
$S$ is a {\em clique} if $G[S]$ is a complete graph.
The set of neighbors of a vertex $v$ in $G$ is denoted by $N_G(v)$, and the
set of neighbors of a vertex set $S$ is $N_G(S) = \bigcup_{v \in S}N_G(v)
\setminus S$. By $N_G[S]$ we denote the closed neighborhood of the set $S$, i.e., the set $S$ together with all its neighbors: $N_G[S]=S \cup N_G(S)$. For an integer $n$, we use $[n]$ to denote the set of integers $\{1,\dots, n\}$. 

The complete graph on $k$ vertices is denoted by $K_k$.
A {\em coloring} of a graph $G$ is a function assigning a color to each
vertex of $G$ such that adjacent vertices have different colors.
A $k$-coloring of a graph uses at most $k$ colors, and the \emph{chromatic number}   $\chi (G)$ is the smallest number
of colors in a coloring of $G$. 
By Brook's theorem, 
for  any connected  graph $G$ with maximum degree $\Delta>2$,  the chromatic number of $G$ is at most $\Delta$ unless $G$ is a complete graph, in which case the chromatic number is $\Delta + 1$. Moreover, 
a $(\Delta +1)$-coloring of a  graph can be found in polynomial time by a straightforward
 greedy algorithm. 

Throughout the paper we implicitly assume that there is a total order on the set of vertices of a given graph. This allows us to treat a $k$-coloring of a $n$-vertex graph simply as a vector in~$[k]^n$.

%A set $S\subseteq V(G)$ is a vertex cover of $G$, if for every edge of $G$ at least one of its endpoints belongs to $S$.
%

Let $G$ be an $n$-vertex graph, $1 \le r \le n$ be an integer, and $V(G)=B_1 \sqcup B_2 \sqcup \ldots \sqcup B_{\lceil \frac nr \rceil}$ be a partition of the set of vertices of $G$. Then {\em the grouping} of $G$ with respect to the partition $V(G)=B_1 \sqcup B_2 \sqcup \ldots \sqcup B_{\lceil \frac nr \rceil}$ is a graph $G_r$
with vertices $B_1, \ldots, B_{\lceil \frac nr \rceil}$ such that $B_i$ and $B_j$
are adjacent if and only if there exist $u \in B_i$ and $v \in B_j$ such that $uv \in E(G)$. To distinguish vertices of the graphs $G$ and $G_r$, the vertices of $G_r$
will be called {\em buckets}.  

For a graph $G$, its {\em square $G^2$} has the same set of vertices as $G$
and $uv \in E(G^2)$ if and only if there is a path of length at most $2$ between $u$
and $v$ in $G$ (thus, $E(G) \subseteq E(G^2)$). It is easy to see that if the degree of $G$ is less than $\Delta$ then the degree of $G^2$ is less than $\Delta^2$ and hence a $\Delta^2$-coloring of $G^2$ can be easily found.

 \paragraph{Homomorphisms and list homomorphisms}
 Let $G$ and $H$ be  graphs.  A mapping 
 $\varphi : V(G)\to V(H)$ is a \emph{homomorphism} if for every edge $uv \in E(G)$ its image $\varphi(u)\varphi(v)\in E(H)$.
 If there exists a homomorphism from $G$ to $H$, we often write $G\to H$.
  The \textsc{ Graph Homomorphism} problem  HOM$(G,H)$
  asks whether or not  $G\to H$.

  Assume that for each vertex $v$ of   $G$   we are given a list $\cL(v) \subseteq V (H)$. A \emph{list homomorphism} of $G$ to $H$, also known as  a list $H$-coloring of $G$, with respect to the lists $\cL$, is a homomorphism  $\varphi : V(G)\to V(H)$, such that $\varphi(v) \in \cL(v)$ for all $v\in V (G)$. 
The \textsc{List Graph Homomorphism} problem LIST-HOM$(G,H)$
  asks whether or not  graph $G$ with lists $\cL$ admits a list homomorphism to $H $ with respect to $\cL$.
 
\paragraph{Exponential Time Hypothesis} 
Our lower bounds are based on a well-known complexity hypothesis formulated by  
%Impagliazzo, Paturi, and Zane   
\citeN{ImpagliazzoPZ01}.
 
\begin{quote}
\textbf{Exponential Time Hypothesis (ETH)}:  There is a constant $q>0$ such that 3-CNF-SAT with $n$ variables and $m$ clauses cannot be solved in time $2^{qn}(n+m)^{\cO(1)}$.
\end{quote}

This hypothesis is widely applied in the theory of exact exponential algorithms, we refer to \cite{CFKLMPPS2014,LMS2013} for an overview of ETH and its implications. 
 
In this paper we  use the following well-known application of ETH with respect to \textsc{$3$-Coloring} (see, e.g., Theorem~$3.2$ in~\cite{LMS2013}, and Exercise $7.27$ in~\cite{S2005}).
The \textsc{$3$-Coloring} problem is the problem to decide whether the given graph can be properly colored in $3$ colors.

\begin{proposition}
\label{prop:col}
Unless ETH fails, there exists a constant $q>0$ such that  \textsc{$3$-Coloring} on $n$-vertex graphs of average degree four cannot be solved in time $\cOs\left(2^{q n} \right)$.
\end{proposition}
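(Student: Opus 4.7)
The plan is to chain together three standard ingredients. First, by the Sparsification Lemma of Impagliazzo, Paturi, and Zane, ETH implies a constant $q_1>0$ such that 3-CNF-SAT with $n$ variables cannot be decided in time $2^{q_1 n}$ even when the number $m$ of clauses is $O(n)$. A further occurrence-reduction step—replacing each variable by a copy per occurrence and linking the copies by a cycle of implication clauses $(x_i \vee \overline{x_{i+1}})$ and $(\overline{x_i} \vee x_{i+1})$—yields, at the cost of another constant factor in $n$, an ETH-hard variant of 3-SAT in which every literal occurs at most a fixed constant number of times.

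Second, I would apply the classical polynomial-time reduction from 3-SAT to \textsc{$3$-Coloring} (palette triangle, literal triangles, and clause OR-gadgets). On bounded-occurrence inputs this reduction produces a graph $G'$ with $N=O(n)$ vertices and maximum degree bounded by an absolute constant $\Delta$, hence $|E(G')|\le \Delta N/2 = O(N)$, and $G'$ is 3-colorable if and only if the formula is satisfiable.

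Third, I would pad $G'$ by a disjoint union with 3-colorable gadgets of controlled density to force the average degree to be exactly $4$. Copies of the tripartite graph $K_{3,3,3}$ are 3-colorable and have average degree $6$, and can be added to raise the current average; isolated edges (average degree $1$) and paths can be added to lower it; adding single edges and isolated vertices provides unit-level adjustments. Since the original $G'$ has $\Theta(N)$ vertices and $O(N)$ edges, a constant number of gadgets of each type suffice to make the total satisfy $2|E|/|V|=4$, producing a graph $G''$ with $N''=\Theta(n)$ vertices, average degree exactly $4$, that is 3-colorable iff the original 3-SAT formula is satisfiable. A hypothetical algorithm solving \textsc{$3$-Coloring} on such graphs in time $\cOs(2^{q N''})$, combined with $N''=\Theta(n)$, would yield a $2^{q_1 n}$-time algorithm for the ETH-hard instance once $q$ is taken small enough, a contradiction.

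The main obstacle is bookkeeping rather than conceptual: each of the four steps (sparsification, occurrence-reduction, the 3-SAT to 3-Coloring reduction, and the padding) must inflate the instance size by at most a constant multiplicative factor, and the product of these factors determines the final admissible $q$. The subtlest point is the padding: one must verify both that the added gadgets preserve 3-colorability of ``yes'' instances (immediate since each gadget is 3-colorable and shares no vertices with $G'$) and that the mix of $K_{3,3,3}$-copies, edges, and vertices can be chosen to land on \emph{exactly} average degree four; the latter reduces to a straightforward integer-arithmetic exercise using the constant-size gadgets available.
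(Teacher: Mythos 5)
The paper does not prove Proposition~\ref{prop:col} itself; it cites Theorem 3.2 of the Lokshtanov--Marx--Saurabh survey and an exercise of Sipser. Your chain (Sparsification Lemma $\to$ bounded-occurrence $3$-SAT $\to$ $3$-Coloring $\to$ density padding) is the standard route those sources take, so your overall strategy is the right one. However, two of your intermediate claims are false as written.

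First, after the classical $3$-SAT-to-$3$-Coloring reduction the resulting graph does \emph{not} have constant maximum degree: the ``ground''/``base'' vertex of the palette triangle is adjacent to every literal vertex, so it has degree $\Theta(n)$. The statement ``maximum degree bounded by an absolute constant $\Delta$, hence $|E(G')|\le \Delta N/2$'' is therefore wrong. Fortunately, $|E(G')|=O(N)$ still holds by direct counting (the palette contributes $O(n)$ edges in total, literal triangles $O(n)$, OR-gadgets $O(m)=O(n)$ after sparsification), and average degree is what the proposition actually needs, so the argument survives once you drop the max-degree claim and argue $|E|=O(N)$ directly. (Bounding the maximum degree is handled separately in the paper, in the reduction behind Lemma~\ref{lemma:3col}, which replaces high-degree vertices by cycles of triangles; that is not part of Proposition~\ref{prop:col}.)

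Second, the padding step cannot be done with ``a constant number of gadgets of each type.'' If $G'$ has $\Theta(N)$ vertices with current average degree $\bar d_0\ne 4$, the discrepancy $2|E|-4|V|$ is $\Theta(N)$, so you must add $\Theta(N)$ copies of your constant-size $3$-colorable gadgets (or one gadget of size $\Theta(N)$) to hit exactly four. This is again a trivially correctable bookkeeping error: $\Theta(N)$ constant-size gadgets still keeps the total vertex count $\Theta(n)$, which is all the final constant-factor accounting needs. A cleaner phrasing is to first note that $|E(G')|\le 2|V(G')|$ can be ensured, then add isolated edges or $K_{3,3,3}$'s in number $\Theta(N)$ until $2|E|/|V|=4$ exactly. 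Finally, the occurrence-reduction step you insert is unnecessary for this proposition (it is needed only if you want bounded \emph{maximum} degree at this stage), though it does no harm.
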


It is well known that  \textsc{$3$-Coloring} remains NP-complete on graphs of maximum vertex degree four. Moreover, the classical reduction, see e.g. \cite{GareyJ79}, allows for a given $n$-vertex graph $G$ to construct a graph $G'$ with maximum vertex degree at most four  and   $|V(G')|=\cO(|E(G)|)$ such that $G$ is $3$-colorable if and only if $G'$ is.  Thus Proposition~\ref{prop:col} implies the following (folklore) lemma which will be used in our proofs. 
 
\begin{lemma}\label{lemma:3col}
\label{thm:3col}
Unless ETH fails, there exists a constant $q >0$  such that there is no algorithm solving \textsc{$3$-Coloring} on $n$-vertex  graphs of maximum degree four in time $\cOs\left(2^{q n} \right)$.
\end{lemma}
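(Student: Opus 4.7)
The plan is to combine Proposition~\ref{prop:col} with the standard degree-reduction gadget. Start from an $n$-vertex instance $G$ of \textsc{$3$-Coloring} of average degree four, so $|E(G)| = 2n$. I would then invoke the classical polynomial-time reduction (see, e.g., \cite{GareyJ79}) which transforms $G$ into a graph $G'$ of maximum degree at most four by replacing each vertex $v$ of degree $d_v$ with a small gadget (for instance, a cycle or a tree of ``copies'' of $v$ linked by equality-enforcing subgraphs) in such a way that $G'$ is $3$-colorable if and only if $G$ is $3$-colorable. The crucial quantitative feature of this reduction is that the number of vertices added in place of $v$ is $O(d_v)$, so
\[
|V(G')| \le c \cdot \sum_{v \in V(G)} d_v = 2c\,|E(G)| = 4cn
\]
for some absolute constant $c$.

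Now suppose, for contradiction, that for every constant $q > 0$ there is an algorithm solving \textsc{$3$-Coloring} on maximum-degree-four graphs on $N$ vertices in time $\cOs(2^{qN})$. Apply such an algorithm with $q := q_0 / (4c)$, where $q_0$ is the constant from Proposition~\ref{prop:col}, to the graph $G'$. The total running time is
\[
\cOs\!\left(2^{q\,|V(G')|}\right) = \cOs\!\left(2^{q \cdot 4cn}\right) = \cOs\!\left(2^{q_0 n}\right),
\]
and since the reduction itself runs in polynomial time, this yields an algorithm for \textsc{$3$-Coloring} on $n$-vertex graphs of average degree four running in time $\cOs(2^{q_0 n})$, contradicting Proposition~\ref{prop:col}.

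Choosing $q := q_0/(4c)$ as the constant in the statement of the lemma makes the argument go through, and the lemma follows. The only non-routine step is recalling the degree-reduction gadget and verifying that it blows up the vertex count by only a constant factor (which is what keeps the exponent linear in $n$); the main potential obstacle would be using a gadget whose size depends super-linearly on the degree, but the standard cycle-of-copies construction from \cite{GareyJ79} has exactly the linear dependence needed. No new ideas are required beyond combining the known lower bound with this size-preserving reduction.
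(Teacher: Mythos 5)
Your proof is correct and takes exactly the route the paper intends: it applies the standard degree-reduction from \cite{GareyJ79} to an average-degree-four instance, uses the fact that the gadget blows up the vertex count only linearly in $|E(G)|=2n$, and then transfers the lower bound from Proposition~\ref{prop:col}. The paper treats this as folklore and only sketches the argument, but your fleshed-out version matches it.
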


%
% 
%\begin{lemma}\label{lemma:3col}
%Unless ETH fails, there exists a constant $q>0$  such that there is no algorithm solving \textsc{$3$-Coloring} on $n$-vertex  graphs of maximum degree four in time $\cO\left(2^{q n} \right)$.
%\end{lemma}
%
\section{Auxiliary Lemmata}\label{sec:lemmata}
In this section we provide reductions and auxiliary lemmata about colorings which will be used to prove lower bounds for \GH{} and  \SI.
\subsection{Balanced Colorings}\label{sec:colorings}
In the following we show how to construct a specific ``balanced" coloring of a graph
in polynomial time. Let $G$ be a graph of constant maximum degree. The coloring of $G$ we want to construct should satisfy three properties. First, it should be a proper coloring of~$G^2$. Then the size of each color class should be bounded as well as the number of edges between vertices from different color classes. More precisely, we prove the following lemma.

\begin{lemma}\label{lemma:coloring}
For any constant $d$, there exist constants $\alpha, \beta, \tau>1$ and a polynomial time algorithm
that for a given graph $G$ on $n$ vertices of maximum degree $d$ and an 
integer $\tau \le L \le \frac{n(d^2-1)}{2d^2(d^2+1)}$, finds a coloring $c \colon V(G) \to [L]$ satisfying the following properties:
\begin{enumerate}
\item The coloring $c$ is a proper coloring of~$G^2$.
\item There are only a few vertices of each color: 
for all  $i \in [L]$,
\begin{equation}\label{eq:balver}
|c^{-1}(i)| \le \left\lceil\alpha \cdot \frac{n}{L}\right\rceil \, .
\end{equation}
\item There are only a few  edges of $G$  between each pair of colors:
For all $i \neq j \in [L]$, we have  
\begin{multline*}\label{eq:fg}
k_{i,j} := |\{uv \in E (G) \colon c(u)=i, c(v)=j\}| \le 
K_{i,j} := \left\lceil\beta \cdot \frac{\min\{|c^{-1}(i)|, |c^{-1}(j)|\}}{L}\right\rceil \, .
\end{multline*}
\end{enumerate}
\end{lemma}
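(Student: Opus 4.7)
The plan is to first construct a proper coloring of $G^2$ with at most $d^2+1$ colors in polynomial time: since $G$ has maximum degree $d$, the square $G^2$ has maximum degree at most $d^2$, and so a proper $(d^2+1)$-coloring of $G^2$ can be produced by a simple greedy procedure (or by Brooks' theorem). Denote the resulting color classes by $V_1,\ldots,V_{d^2+1}$. Any refinement of this partition into smaller pieces inherits Property 1, provided each final color class is contained in one of the $V_k$'s. A useful consequence, used below, is that the $d$ neighbors of any vertex $v$ lie in $d$ distinct classes of any proper coloring of $G^2$.

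Second, allocate to each $V_k$ a budget $L_k$ of sub-colors, with $\sum_k L_k = L$ and $L_k \approx L|V_k|/n$ (rounded so that $L_k\ge 1$ whenever $V_k\ne\emptyset$). Then partition each $V_k$ into $L_k$ sub-classes of nearly equal size; the resulting $L$ sub-classes are the final color classes $C_1,\ldots,C_L$. Each has size at most $\lceil|V_k|/L_k\rceil = O(n/L)$, yielding Property 2 with $\alpha$ depending only on $d$. The assumption $\tau\le L\le \frac{n(d^2-1)}{2d^2(d^2+1)}$ is exactly what is needed for the allocation to be feasible: the lower bound $\tau$ is chosen so that $L\ge d^2+1$ (ensuring enough sub-colors to go around), while the upper bound ensures enough vertices in each sub-class to absorb rounding in both the size and edge-count constraints.

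The main work is Property 3. From Property 1 every vertex $v$ has at most one neighbor in each color class, so crudely $|E(C_i,C_j)|\le\min(|C_i|,|C_j|)$; this must be tightened by a factor $L$. I would obtain the tighter bound by choosing the sub-partition of each $V_k$ to be a uniformly random equipartition into $L_k$ parts. For $C_i\subseteq V_k$ and $C_j\subseteq V_{k'}$, the expected value of $|E(C_i,C_j)|$ is $e(V_k,V_{k'})/(L_kL_{k'}) \le d\min(|V_k|,|V_{k'}|)/(L_kL_{k'}) = O(\min(|C_i|,|C_j|)/L)$, and a Chernoff bound followed by a union bound over the $\binom{L}{2}$ pairs shows that with high probability all pairs satisfy the desired bound with some constant $\beta=\beta(d)$. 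The construction is then derandomized via the method of conditional expectations applied to a potential function summing excesses in sizes and pairwise edge counts.

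The main obstacle I anticipate is handling all three balance requirements simultaneously: a purely size-balancing greedy need not control edge counts, and conversely. The randomized-then-derandomized scheme above sidesteps this by treating all constraints as a single probabilistic event, but the counting needs just enough slack in the parameters to make the union bound go through, which is precisely the role played by the explicit numerical upper bound on $L$.
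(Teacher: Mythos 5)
There is a genuine gap in your handling of Property~3 when $L$ is large, close to its allowed upper bound $\Theta(n/d^2)$. In that regime each final color class has size $\Theta(1)$, so $K_{i,j}=\lceil\beta\min(|C_i|,|C_j|)/L\rceil=1$ and Property~3 demands that every pair of classes span at most one edge. The graph has $\Theta(n)=\Theta(L)$ edges spread over $\Theta(L^2)$ pairs of cells; a uniformly random equipartition then produces a ``collision'' (some pair of cells receiving two edges) with probability bounded away from zero, by a birthday-paradox argument. Concretely, take $G$ to be a disjoint union of $n/3$ triangles ($d=2$): the three $G^2$-classes each have size $n/3$, and under a random split into $\Theta(L)$ sub-classes the expected number of colliding triangle pairs is $\Theta(n^2/L^2)$, which is a (large) constant when $L=\Theta(n)$, so a random split fails with probability close to $1$. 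Chernoff concentration cannot rescue this: the per-pair expectation $\mathbb{E}[e(C_i,C_j)]$ is far below $1$, and pushing the tail below $1/\binom{L}{2}$ would require $\beta$ to grow polynomially in $n$, contradicting $\beta=\beta(d)$. Since the expected total excess is $\Omega(1)$, derandomizing by conditional expectations on that potential only yields a configuration with \emph{bounded}, not zero, excess.

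The paper's proof is deterministic and greedy, not a random split. It first constructs a balanced precoloring of an independent set of $G^2$ of size $\lceil n/(d^2+1)\rceil$, which gives every color class an initial size at least $\lfloor n/(L(d^2+1))\rfloor$; this lower bound on class sizes is absent from your construction and is essential to what follows. It then colors the remaining vertices one at a time, showing by a direct count that each new vertex has at most $d^2 + L/\alpha + 4\alpha Ld^4/\beta < L$ forbidden colors, so a legal choice always exists. That online greedy argument, not a concentration bound, is the real content of the lemma; to repair your route you would essentially need to reproduce it.
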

\begin{proof}
The algorithm starts by constructing greedily an independent set $I$ of $G^2$ of size $\left\lceil\frac{n}{d^2+1}\right\rceil$. 
Since the maximum vertex degree of $G^2$ does not exceed~$d^2$, this is always possible. We construct a partial coloring of $G^2$   by  coloring   the vertices of $I$ in $L$ colors  such that the obtained coloring is a balanced coloring of $G^2[I]$, meaning that the number of vertices of each color is $\lfloor |I|/L \rfloor$ or $\lceil |I|/L \rceil$.
Since $I$ is an independent set in $G^2$, such a coloring can be easily constructed in polynomial time.
In the obtained partial equitable coloring, we have that for every $i\in [L]$  
\begin{equation}\label{eq:gh}
|c^{-1}(i)| \ge \left\lfloor\frac{n}{L(d^2+1)}\right\rfloor \ge \frac{n}{2Ld^2} 
\end{equation}
(recall that $L \le \frac{n(d^2-1)}{2d^2(d^2+1)}$).
Let us note that the obtained precoloring of $G^2$ clearly satisfies the   first and the third conditions of the lemma. Since 
  the size of every $c^{-1}(i)$, $i\in [L]$, does not exceed $|c^{-1}(i)| \le \left\lceil  \frac{n}{L}\right\rceil$, the second condition of the lemma also holds for every  $\alpha \ge 1$.
  
We extend the precoloring of $G^2$ to the required coloring by the following greedy procedure: We select an arbitrary uncolored vertex $v$ and   color it by a color from  $[L]$ such that the new partial coloring also satisfies the three conditions of the lemma. In what follows, we prove that such a greedy choice of a color is always possible.

Coloring of a  vertex $v$ with a color $i$  can be {forbidden} only because it breaks one of the three conditions. Let us count, how many colors can be  forbidden for $v$ by each of the three constraints.
\begin{enumerate}
\item Vertex $v$ has at most $d^2$ neighbors in $G^2$, so the first constraint forbids at most $d^2$ colors.
\item The second constraint forbids all the colors that are ``fully packed'' already. The number of such colors is at most $\frac{n}{\left(\frac{\alpha n}{L}\right)}=\frac L\alpha$.
\item To estimate the number of colors forbidden by the third condition,  we go through all the neighbors of~$v$. A neighbor $u \in N_G(v)$ forbids a color $i$
if coloring $v$ by  $i$ exceeds the allowed bound on $k_{i,c(u)}$.
Hence to estimate the number of such forbidden colors $i$ (for every fixed vertex $u$)
we need to estimate how many  values of $k_{i,c(u)}$ can reach the allowed upper bound $K_{i,c(u)}$. We have that 
\begin{align*}
|\{i \colon k_{i,c(u)}  &=  K_{i,c(u)}\}|   \stackrel{\text{by~\eqref{eq:gh}}}{\le}  
\left| \left\{ 
i \colon k_{i,c(u)} \ge \frac{\beta n}{2L^2d^2} 
\right\} \right| = \left| \left\{ 
i \colon k_{i,c(u)} \cdot\frac{2L^2d^2}{\beta n} \ge 1
\right\} \right| & \\  &\leq  \sum_{i \in [L]}k_{i,c(u)}\cdot \frac{2L^2d^2}{\beta n}
. &
\end{align*}

The number of edges between vertices of the same color $c(u)$ and all other vertices of the graph does not exceed the cardinality of the color class $c(u)$ times $d$. Thus we have 
\begin{align*}
 \sum_{i \in [L]}k_{i,c(u)}\cdot \frac{2L^2d^2}{\beta n}
 &
 \le  d|c^{-1}(c(u))|\cdot \frac{2L^2d^2}{\beta n}    \stackrel{\text{by~\eqref{eq:balver}}}{\le} d\left\lceil\frac{\alpha n}{L}\right\rceil\cdot \frac{2L^2d^2}{\beta n} \\ &\le 
d\frac{2\alpha n}{L}\cdot \frac{2L^2d^2}{\beta n}  = 
\frac{4\alpha Ld^3}{\beta} \, .&
\end{align*}
where the last inequality is due to $\alpha>1$ and $L\le n$.

Therefore, 
\[
|\{i \colon k_{i,c(u)}  =  K_{i,c(u)}\}|  \leq \frac{4\alpha Ld^3}{\beta} \, . 
\]

Since the degree of $v$ in $G$ does not exceed $d$, we have that the  number of colors forbidden by the third constraint is at most $\frac{4\alpha Ld^4}{\beta}$.
\end{enumerate}
Thus, the total number of colors forbidden by all the three constraints for the vertex $v$ is at most
\[d^2 + \frac{L}{\alpha} + \frac{4\alpha L d^4}{\beta} \, .\]
By taking sufficiently large constants 
$\alpha$, $\beta$, and $\tau$, say $\alpha=4$, $\beta=16\alpha^2 d^4$,  and $\tau = \frac{16(d^2 +1)}{11}$, we guarantee that this expression does not exceed $L-1$
for every $L \ge \tau$. Therefore, there always exists a vacant color for the vertex $v$ which concludes the proof.
\end{proof}

Now with help of Lemma~\ref{lemma:coloring}, we describe a way to construct a specific grouping of a graph. The properties of such groupings are crucial for the final reduction. 

\begin{lemma}\label{lem:group}
For any constant $d$, there exists a constant $\lambda=\lambda(d)$ and a polynomial time algorithm
that for a given graph $G$ on $n$ vertices of maximum degree $d$ and an integer $r\leq \sqrt{\frac{n}{2\lambda}}$, 
 finds a grouping $\gr{G}$ of $G$ and a coloring $\tilde{c} \colon V(\gr{G}) \to [\lambda r]$ such that
\begin{enumerate}
\item The number of buckets of  $\gr{G}$ is 
\[|V(\gr{G})| \le \frac{|V(G)|}{r} \,;\]
\item The coloring $\tilde{c}$ is a proper coloring of $\gr{G}^2$;
\item Each bucket $B \in V(\gr{G})$ is an independent set in~$G$, i.e. for every $u,v\in B$, $uv \not\in E(G)$;
\item For every pair of  buckets $B_1,B_2 \in V(\gr{G})$ there is at most one edge between them in~$G$, i.e.
\[|\{uv \in E(G) \colon u \in B_1, v \in B_2\}| \le 1 \, .\]
\end{enumerate}
\end{lemma}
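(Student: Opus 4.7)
My plan is to apply Lemma~\ref{lemma:coloring} to $G$ with $L=\lambda r$ color classes (for a sufficiently large constant $\lambda=\lambda(d)$), and then refine each color class into buckets using a ``conflict graph'' that encodes which vertices of a class must be placed in different buckets.

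First I invoke Lemma~\ref{lemma:coloring} on $G$ with $L=\lambda r$, obtaining a coloring $c:V(G)\to[\lambda r]$ proper on $G^2$, with class sizes $|C_j|\le\lceil\alpha n/(\lambda r)\rceil$ and at most $K_{i,j}\le\lceil\alpha\beta n/(\lambda^2 r^2)\rceil$ $G$-edges between any pair of classes. For each $j$, I form the conflict graph $F_j$ on vertex set $C_j$ by placing an edge between $v,v'\in C_j$ whenever they both have a $G$-neighbor in some common class $C_i$, $i\ne j$. Because $C_i$ is $G^2$-independent, each $v\in C_j$ has at most one neighbor in $C_i$, so the set $V_{i,j}:=\{v\in C_j:N_G(v)\cap C_i\ne\emptyset\}$ has size at most $K_{i,j}$; since every $v\in C_j$ belongs to at most $d$ such sets $V_{i,j}$ (one per $G$-neighbor), I get $\Delta(F_j)=O(d\alpha\beta n/(\lambda^2 r^2))$.

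Next I properly color each $F_j$ greedily with at most $\Delta(F_j)+1$ colors, declare each color class of $F_j$ to be a bucket, and set $\tilde c(B)=j$ for every bucket $B\subseteq C_j$. Property~(3) is immediate since buckets are contained in $G^2$-independent classes. For~(4): between two buckets $B\subseteq C_i$ and $B'\subseteq C_j$ with $i\ne j$, the set $V_{i,j}$ is a clique in $F_j$ (any two of its members share the common neighborhood class $C_i$), so the $F_j$-independent bucket $B'$ meets $V_{i,j}$ in at most one vertex $v$; this $v$ has at most one neighbor in $C_i$, yielding at most one cross-edge between $B$ and $B'$. Property~(2) uses the same principle at $\tilde G$-distance two: two buckets $B_1,B_2\subseteq C_j$ cannot share a common $\tilde G$-neighbor $B''\subseteq C_i$, since either (a) the two mediating vertices in $B''$ are distinct, forcing two elements of $V_{j,i}$ into the $F_i$-independent bucket $B''$, or (b) they coincide, forcing two distinct vertices of $C_j$ to share a neighbor in $C_i$, contradicting $G^2$-independence of $C_j$.

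The main technical obstacle is Property~(1), ensuring $|V(\tilde G)|\le n/r$. The greedy bound gives $|V(\tilde G)|\le L\cdot(\Delta_{\max}+1)=O(d\alpha\beta n/(\lambda r))+O(d\lambda r)$, where $\Delta_{\max}=\max_j\Delta(F_j)$. The first summand is at most $n/(2r)$ once $\lambda\ge 2d\alpha\beta$, while the second is at most $n/(2r)$ exactly when $r\le\sqrt{n/(2\lambda)}$, provided that an extra factor of $d+1$ has been absorbed into $\lambda$. Choosing $\lambda=\Theta(d^3\beta)=O(d^7)$ balances both contributions and delivers the promised bucket bound in the prescribed range of $r$; the whole construction runs in polynomial time since both Lemma~\ref{lemma:coloring} and greedy coloring of each $F_j$ are polynomial-time procedures.
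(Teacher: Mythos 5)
Your proof is correct and follows essentially the same route as the paper's: invoke Lemma~\ref{lemma:coloring} with $L=\lambda r$, build for each color class a conflict graph (the paper calls them $F_i$ with the same edge rule), color each greedily, take the color classes as buckets, and verify the four properties exactly as you do. The only differences are cosmetic bookkeeping in the final count — the paper bounds $\deg_{F_i}(v)\le |V(F_i)|/(2r)$ class-by-class and sums to $n/(2r)+L$ rather than using a uniform $\Delta_{\max}$, and your stray factor of $d$ in the ``$O(d\lambda r)$'' term (which should just be $L=\lambda r$) together with the subsequent ``absorb $d+1$'' remark is unnecessary — but this does not affect the validity of the argument.
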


\begin{proof} Let $\beta=\beta(d)$ be a constant provided by Lemma~\ref{lemma:coloring} and let $L=\lambda r$ for $\lambda=\lambda(d)=2d\beta$.  Let also $c$ be a coloring of $G$ in $L$ colors provided by Lemma~\ref{lemma:coloring}. We want to construct a grouping $\gr{G}$ of $G$ such that for all buckets $B \in V(\gr{G})$ and all $u \neq v \in B$,
\begin{align}\label{eq:rz}
c(u)=c(v) \text{ and } c(u') \neq c(v') \\
\text{ for all } u' \in N_G(u), v' \in N_G(v).\nonumber
\end{align}
In other words, all  vertices of the same bucket are of  the same color while any two neighbors of such two vertices are of different colors.

For each color $i\in[L]$,  we introduce an auxiliary constraint graph $F_i$. The vertex set of $F_i$ is  $V(F_i)=c^{-1}(i)$ and its edge set is 
\[
E(F_i) = 
\{uv \colon \exists u'\in N_G(u),v' \in N_G(v), c(u')=c(v') \}.
\]
In our construction, each bucket of $\tilde{G}$ will be an independent set in some $F_i$. Note that this will immediately imply~(\ref{eq:rz}). The degree of any vertex $v\in V(F_i)$ is at most
\[
\deg_{F_i}(v) \leq
 \sum_{v' \in N_G(v)} (K_{c(v),c(v')}-1) \leq
 d \left(\left\lceil\frac{\beta |c^{-1}(v)|}{L}\right\rceil-1\right) \leq
 \frac{d \beta |V(F_i)|}{L} =
 \frac{|V(F_i)|}{2 r} \, .\]
This means that the greedy algorithm finds a proper coloring of each $F_i$ in at most $\frac{|V(F_i)|}{2r}+1$ colors, which splits each $F_i$ in at most $\frac{|V(F_i)|}{2r}+1$ independent sets. We create a separate bucket of $\gr{G}$ from each independent set of each $F_i$. Now we show that the four conditions from the lemma statement hold.
\begin{enumerate}
\item  For the first property, the number of independent sets in each $F_i$ is at most $\frac{|V(F_i)|}{2r}+1$. Thus the number of buckets in $\gr{G}$ is 
\[
 |V(\gr{G})|\leq
 \sum_{i\in[L]} \left(\frac{|V(F_i)|}{2r}+1\right) =
 \sum_{i\in[L]} \left(\frac{|c^{-1}(i)|}{2r}+1\right) =
 \frac{n}{2r}+L\le\frac{n}{r} \, .\]
since $L=\lambda r$ and $2\lambda r^2 \le n$.
\item For the second property, by Lemma~\ref{lemma:coloring}, the coloring $c$ is  proper in~$G^2$.
We can convert $c$ to a coloring $\tilde{c} \colon V(\gr{G}) \to [\lambda r]$ by assigning each bucket the color of its vertices (all of them have the same color). The resulting coloring $\tilde{c}$ is a proper coloring of $\gr{G}^2$ by \eqref{eq:rz} and the fact that $c$ is a proper in~$G^2$.
\item All buckets of $\gr{G}$ are monochromatic with respect to $c$, thus, each bucket $B \in V(\gr{G})$ is an independent set in~$G$ and the third property holds.
\item Finally, by (\ref{eq:rz}), there is at most one edge in $G$ between vertices corresponding to any pair of buckets in~$\gr{G}$.
\end{enumerate}
Thus, the constructed grouping and its coloring satisfy all conditions of the lemma.
\end{proof}

\subsection{Reductions}\label{sec:reductions}
\label{sec:reductions}

This section constitutes the main technical part of the paper and contains all the necessary reductions used in the lower bounds proofs. Using these reductions as building
blocks the lower bounds follow from careful calculations. The general pipeline is as follows. To prove a lower bound, we take a graph $G$ of maximum degree four that needs to be $3$-colored and construct an equisatisfiable instance $(G',H')$
of \LGH{} using Lemma~\ref{lemma:3coltolisthom}% or Lemma~\ref{lemma:3coltolhomvc})
. We then use Lemma~\ref{lemma:lhomtohom} to transform $(G',H')$ into an equisatisfiable instance $(G'',H'')$
of \GH{}. 
Thus, an algorithm checking whether there exists a homomorphism from $G''$ to $H''$ can be used to check whether the initial graph $G$ can be $3$-colored. At the same time we know a lower bound for \textsc{$3$-Coloring}
under ETH (Lemma~\ref{lemma:3col}). This gives us a lower bound for \GH{} under the ETH assumption. In order to prove the hardness of \SI{}, we show an exponential-time reduction from \GH{} to \SI{}.
%We emphasize that our reductions %
% provide almost tight lower bounds for HOM under ETH.

%\begin{lemma}[\textsc{3-Coloring}$(G)$ $\to$ LIST-HOM$(G',H')$ with small $|V(G')|$]\label{lem:coltolhom}
\begin{lemma}[\textsc{3-Coloring} $\to$ \LGH{}]\label{lem:coltolhom}
There exists an algorithm that takes as input a graph $G$ on $n$ vertices of maximum degree $d$ 
that needs to be $3$-colored and an integer $r=o(\sqrt{n})$ and finds an equisatisfiable instance $(G',H')$ of LIST-HOM, where
$|V(G')|\le n/r$  and  $|V(H')| \le \gamma(d)^{r}$, where $\gamma(d)$ is a function of the graph degree. The running time of the algorithm is polynomial
in $n$ and the size of the output graphs.
\label{lemma:3coltolisthom}
\end{lemma}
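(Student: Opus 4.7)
The plan is to use the grouping provided by Lemma~\ref{lem:group} and encode the resulting ``super-variable'' 2-CSP as a list graph homomorphism instance, exploiting the proper coloring $\tilde{c}$ of $\tilde{G}^2$ to keep $|V(H')|$ down to $\gamma(d)^r$.

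First, I would apply Lemma~\ref{lem:group} to $G$ with parameter $r$; since $r=o(\sqrt n)$ we have $r\le\sqrt{n/(2\lambda(d))}$ for all sufficiently large $n$, so the lemma yields a grouping $\tilde{G}$ with $|V(\tilde{G})|\le n/r$, a coloring $\tilde{c}:V(\tilde{G})\to[L]$ (with $L=\lambda(d)\,r$) proper on $\tilde{G}^2$, each bucket independent in $G$, and at most one $G$-edge between any two buckets. Properness of $\tilde{c}$ on $\tilde{G}^2$ means that the $\tilde{c}$-colors of the neighbors of any fixed bucket $B$ are all distinct; combined with the at-most-one-edge property, this gives for each $B$ a well-defined partial map $\pi_B:[L]\to B$ sending each $\ell$ to the endpoint in $B$ of the unique edge from $B$ to its neighboring bucket of $\tilde{c}$-color $\ell$ (if such a neighbor exists).

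I would then take $G':=\tilde{G}$ and set
\[
V(H'):=[L]\times\{f:[L]\to[3]\},\qquad (i_1,f_1)\sim(i_2,f_2)\iff i_1\ne i_2 \text{ and } f_1(i_2)\ne f_2(i_1).
\]
This gives $|V(H')|=L\cdot 3^L\le\gamma(d)^r$ for a suitable constant $\gamma(d)$ (say $\gamma(d)=2\cdot 3^{\lambda(d)}$) and all sufficiently large $r$. For each bucket $B\in V(G')$ the list is
\[
\mathcal{L}(B):=\{(\tilde{c}(B),f):\exists\, c_B:B\to[3]\text{ with }f(\ell)=c_B(\pi_B(\ell))\text{ for all }\ell\in\mathrm{dom}(\pi_B)\}.
\]
The intuition is that $f(\ell)$ is the color $B$ assigns to the vertex it uses to communicate with its $\ell$-colored neighbor; buckets being independent in $G$ means any function $c_B$ is admissible internally.

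The equivalence then comes out quickly. A proper 3-coloring $c$ of $G$ induces the homomorphism $B\mapsto(\tilde{c}(B),\ell\mapsto c(\pi_B(\ell)))$ (with arbitrary values outside $\mathrm{dom}(\pi_B)$): it lies in $\mathcal{L}(B)$ by construction, and for any $\tilde{G}$-edge $B_1B_2$ realized by the $G$-edge $uv$ one reads off $f_{B_1}(\tilde{c}(B_2))=c(u)\ne c(v)=f_{B_2}(\tilde{c}(B_1))$. Conversely, a list homomorphism supplies per-bucket colorings consistent with the $f_B$'s, which, since the buckets partition $V(G)$, glue into a global $c:V(G)\to[3]$ whose properness on each $G$-edge is exactly the $H'$-adjacency condition on the corresponding $\tilde{G}$-edge. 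The main obstacle is the size bound $|V(H')|\le\gamma(d)^r$: a naive encoding that stores inside each bucket state the explicit map from neighboring buckets to positions within the bucket would blow $|V(H')|$ up to $2^{\Theta(r\log r)}$, losing the tight exponent. The $\tilde{G}^2$-properness of $\tilde{c}$ is what buys the right bound, since it lets ports be indexed by a short color label from $[L]$ rather than by a full position map; all bucket-specific structure is absorbed by the lists, and $H'$ remains a single uniform graph of size $3^{O(r)}$.
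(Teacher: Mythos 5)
Your proof is correct and follows essentially the same route as the paper's: apply the grouping lemma, encode each bucket's 3-coloring as a "port vector" indexed by the $[L]$ label of the neighboring bucket, and let the $H'$-adjacency test compare the two ports along the unique cross-bucket edge. The only cosmetic differences are that you use $\{1,2,3\}^L$ with arbitrary filler for unused ports where the paper uses $\{0,1,2,3\}^L$ with an explicit $0$ sentinel, and you add a (harmless, redundant) $i_1\neq i_2$ clause to the edge relation; note your specific choice $\gamma(d)=2\cdot 3^{\lambda(d)}$ only gives $L\cdot 3^L\le\gamma^r$ for $r$ large, so one should use, e.g., $\gamma(d)=6^{\lambda(d)}$ via $L\le 2^L$ to get it unconditionally as the paper does.
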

\begin{proof}
{\em Constructing the graph $G'$.}
Let $G'$ be the grouping of $G$ and $c\colon V(G') \to [L]$ be the coloring provided by Lemma~\ref{lem:group} where~$L=\lambda(d) r$. To distinguish colorings of $G$ and $G'$, 
we call $c(B)$, for a bucket $B \in V(G')$, a~\emph{label} of~$B$.
Consider a bucket $B \in V(G')$, i.e., a subset of vertices of~$G$, and a label $i \in [L]$. From item 2 of Lemma~\ref{lem:group} we know that $c$ is a proper coloring of $(G')^2$. This, in particular, means that there is at most one $B' \in N_{G'}(B)$ such that $c(B')=i$. Moreover, if such $B'$ exists then, by item 4 of Lemma~\ref{lem:group}, there exists a unique $u \in B$ and unique $u' \in B'$ such that $uu' \in E(G)$. This allows us to define the following mapping $\phi_B \colon [L] \to B \cup \{0\}$: $\phi_B(i)=u$ if such $B'$ exists and $\phi_B(i)=0$ if $B$ has no neighbor $B'$ of label~$i$. Without loss of generality we assume that $G$ does not have isolated vertices. Since each vertex has a neighbor outside of its bucket (it cannot have a neighbor in its own bucket as buckets are independent), $B \subseteq \phi_B(L)$.

{\em Constructing the graph $H'$.}
We now define a redundant encoding of a $3$-coloring of a bucket $B \in V(G')$. Namely, let $\mu_B \colon (f \colon B \to \{1,2,3\}) \to \{0,1,2,3\}^L$.
That is, for a $3$-coloring $f \colon B \to \{1,2,3\}$ of $B$, $\mu_B$ is a vector $v$ of length~$L$. For $i \in [L]$, by $v[i]$ we denote the $i$-th component of~$v$. The value of $v[i]$ is defined as follows:
if $\phi_B(i)=0$ then $v[i]=0$, otherwise $v[i]=f(\phi_B(i))$.
In~other words, for a given bucket $B$ and a $3$-coloring $f$ of its vertices, for each possible label~$i \in [L]$, $\mu_B$
is the color of the vertex $u \in B$ that has a neighbor in a bucket with label $i$, and $0$ if there is no such vertex~$u$. 
  
We are now ready to construct the graph $H'$.
The set of vertices of $H'$ is defined as follows:
\[ V(H') =\{(R,l) \colon  R \in \{0,1,2,3\}^L \text{ and } l \in [L] \}\,,\] i.e., a~vertex of $H'$ is an encoding of a $3$-coloring of a bucket and a label of a bucket. The list constraints of this instance of \LGH{} are defined as follows:
a~bucket $B \in V(G')$ is allowed to be mapped to $(R,l) \in V(H')$ if and only if $l=c(B)$ and there is a $3$-coloring $f$ of $B$ such that $\mu_B(f)=R$.
Informally, two vertices in $V(H')$ are joined by an edge 
if they define two consistent $3$-colorings. Formally, 
$(R_1,l_1)(R_2, l_2) \in E(H')$ if and only if
$R_1[l_2] \neq R_2[l_1]$. Note that $|V(G')| \le n/r$ by Lemma~\ref{lem:group} and
$|V(H')| \le 4^L \cdot L \le 4^L \cdot 2^L=8^{\lambda(d)r}=\gamma(d)^r$
for $\gamma(d)=8^{\lambda(d)}$. 

{\em Running time of the reduction.} The reduction clearly takes time polynomial in the size of input and output.

{\em Correctness of the reduction.} It remains to show that $G$ is $3$-colorable if and only if $(G',H')$ is a yes-instance of \LGH.

Assume that $G$ is $3$-colorable and take a proper $3$-coloring $g$ of~$G$. It defines a homomorphism from $G'$ to $H'$ in a natural way: $B \in V(G')$ is mapped to $(\mu_B(g|_B), c(B))$, where $g|_B$
is the function $g$ with its domain restricted to $B$.
Each list constraint is satisfied   by definition. To show that each edge is mapped to an edge, consider an edge $BB' \in E(G')$. Then, by item 4 of Lemma~\ref{lem:group} there is a unique edge $uu' \in E(G)$ such that $u \in B, u' \in B'$.
Note that $B$ and $B'$ are mapped to vertices $(R,l)$ and $(R',l')$ such that $R[l']=g(u)$ and $R'[l]=g(u')$. Since $g$
is a proper $3$-coloring of $G$, $g(u) \neq g(u')$. This, in turn, means that $(R,l)(R',l') \in E(H')$ and hence the edge $BB'$ is mapped to this edge in~$H'$.

For the reverse direction, consider a homomorphism $h \colon G' \to H'$. For each bucket $B \in V(G')$, $h(B)$ defines a proper $3$-coloring of~$B$. Together, they define a $3$-coloring $g$ of $G$ and we need to show that $g$ is proper. Assume, to the contrary, that there is an edge $uu' \in E(G)$ such that $g(u)=g(u')$. By item 3 of Lemma~\ref{lem:group}, $u$ and $u'$
belong to different buckets $B, B' \in V(G')$. By the definition of grouping, $BB' \in E(G')$. Since $h$ is a homomorphism,
$(R,l)(R',l') := h(B)h(B') \in E(H')$. At the same time, $R[l']=g(u)=g(u')=R'[l]$ which contradicts the fact that 
$(R,l)(R',l')$ is an edge in~$H'$.
\end{proof}

\begin{lemma}[\LGH $\to$ \GH]
\label{lemma:lhomtohom}
There is a polynomial-time algorithm that from an instance 
$(G,H)$ of LIST-HOM where $|V(G)|=n$, $|V(H)|=h$ constructs an equisatisfiable instance $(G',H')$
of HOM where $|V(G')| \le n+\Delta$, $|V(H')| \le \Delta$ for $\Delta=25h^2$.
\end{lemma}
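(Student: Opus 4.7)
The plan is to enforce the list constraints of $(G, H)$ by attaching to both $G$ and $H$ a common ``rigid'' gadget $R$ on at most $\Delta = 25h^2$ vertices, so that any homomorphism $G' \to H'$ is forced to map the copy of $R$ sitting inside $G'$ isomorphically onto the copy of $R$ sitting inside $H'$. Once this identification is pinned down, the individual list constraints $\mathcal{L}(v)$ can be enforced locally by adjacencies between $v$ and the fixed copy of $R$. This matches the passage in the methods paragraph describing the use of gadgets for which ``any homomorphism from such a graph to itself preserves an order of its specific structures''.

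First I would construct the gadget $R$ as an asymmetric graph on at most $\Delta$ vertices whose only endomorphism is the identity, and whose vertices can be individually distinguished by their neighborhoods inside $R$. The bound $\Delta = (5h)^2$ suggests a two-coordinate, grid- or product-type structure with $O(h)$ ``row'' addresses and $O(h)$ ``column'' addresses, wired so that each coordinate is recoverable from the local structure at every vertex. I would then assign to each $u_i \in V(H)$ a unique ``signature'' set $S_i \subseteq V(R)$; the graph $H'$ is the union of $H$ and $R$ together with the edges $u_i w$ for $w \in S_i$. The graph $G'$ consists of $G$ together with a disjoint copy $R^*$ of $R$, plus, for every $v \in V(G)$, edges from $v$ into $R^*$ chosen so that the vertices of $H'$ whose adjacencies to $R$ are consistent with the adjacencies of $v$ to $R^*$ are exactly the vertices of $\mathcal{L}(v)$.

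The correctness splits into two directions. Given a list homomorphism $\varphi\colon G \to H$, I would extend it to $\varphi' \colon G' \to H'$ by mapping $R^*$ identically onto $R$; edges of $H$ are respected because $\varphi$ is a homomorphism, edges inside $R$ because the extension is the identity on $R^*$, and the new edges between $V(G)$ and $V(R^*)$ because the condition $\varphi(v) \in \mathcal{L}(v)$ is exactly the condition that $\varphi(v)$ is adjacent in $H'$ to the required neighbors in $R$. Conversely, given $\varphi' \colon G' \to H'$, the rigidity of $R$ forces $\varphi'$ restricted to $R^*$ to be the identity homomorphism onto $R \subseteq H'$; the added adjacencies between each $v \in V(G)$ and $R^*$ then pin $\varphi'(v)$ into the set $\mathcal{L}(v) \subseteq V(H) \subseteq V(H')$, and the restriction of $\varphi'$ to $V(G)$ is automatically a list homomorphism from $(G, \mathcal{L})$ into $H$.

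The main obstacle is the explicit construction of the rigid gadget $R$ within the size budget $\Delta = 25 h^2$, together with a family of $h$ distinguishing signatures $S_i$. Establishing rigidity -- that every endomorphism of $R$ is the identity -- will most likely proceed by induction along a canonical ordering of $V(R)$ induced by the asymmetry of the construction, so that images of vertices are forced one by one to coincide with themselves. Packing both rigidity and $h$ pairwise-distinguishable signatures into $(5h)^2$ vertices is where the quadratic dependence and the constant $25$ come from; once the gadget is in hand, the size bounds $|V(G')| \le n + \Delta$ and $|V(H')| \le \Delta$ as well as the polynomial running time are straightforward to verify.
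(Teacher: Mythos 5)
Your high-level plan is the right one and matches the spirit of the paper's construction: attach a shared gadget to both $G$ and $H$, force any homomorphism to map the gadget onto itself, and encode each list $\mathcal{L}(v)$ by adjacencies between $v$ and the gadget. The two directions of equisatisfiability are also correctly sketched. However, the entire content of the lemma is precisely what you defer as ``the main obstacle'': you neither construct the gadget nor prove that it is forced to map onto itself, and without these the proof has no substance. Two specific gaps. First, rigidity of $R$ as a standalone graph (unique endomorphism) does not imply that a homomorphism $G' \to H'$ restricted to the copy $R^*$ lands inside $R$ at all; in $H'$, $R$ is glued to $H$, so $R^*$ could a priori be mapped partly into $H$ or into the signature edges. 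The paper handles this \emph{anchoring} problem explicitly (its Claim~1) by building the gadget out of cliques $K_{h+3}$, which are too large to be hosted by any $h$-vertex subgraph or by the matching part of $H'$, so they have nowhere to go except the gadget's own cliques. Your sketch contains no analogue of this step.

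Second, you aim for a \emph{fully} rigid graph (only endomorphism is the identity), which is both stronger than needed and harder to realize in $O(h^2)$ vertices with the required distinguishing signatures. The paper does not build a rigid graph: its gadget $T_h$ (a chain of $h+1$ blocks, each a $K_{h+3}$ with a $5$-cycle attached) has plenty of nontrivial endomorphisms permuting vertices within a clique or within a cycle. What the paper proves, via a clique-by-clique and block-by-block induction along the chain (Claims~1--3), is only a \emph{partial} rigidity: the linking vertices $z_0,\dots,z_h$ are fixed pointwise, and the matching $A_h$ attached at those $z_i$'s must map $a_ib_i \mapsto a_ib_i$. That partial rigidity is exactly what is needed to force the adjacency pattern from $V(G)$ to $A_h$ (chosen so that $v$ is adjacent to $b_j$ iff $j \notin \mathcal{L}(v)$, while $u_i\in V(H)$ is non-adjacent only to $b_i$) to encode the lists. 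Until you supply a concrete gadget together with an anchoring argument of this kind, the proposal is a restatement of the goal rather than a proof.
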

\begin{proof}
{\em Preparations.}
We start with a simple $6$-vertex gadget $D'$ consisting of a $5$-cycle together with an apex vertex adjacent to all the vertices of the cycle, see Fig.~\ref{fig:D}. 

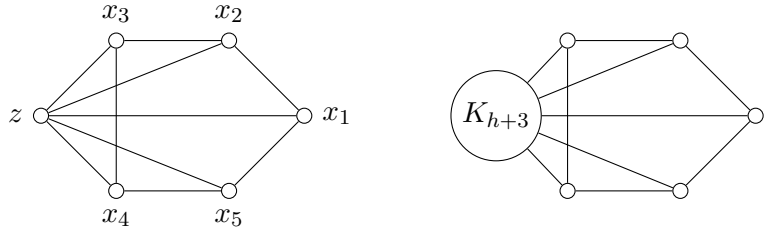
\begin{figure}[ht]
\begin{center}
\begin{tikzpicture}
\begin{scope}[xshift=-6cm]
\foreach \x/\y/\n/\w in {2/0/1/0, 1/1/2/90, -0.5/1/3/90, -0.5/-1/4/-90, 1/-1/5/-90}
\node[circle,draw,minimum size=2mm,inner sep=0mm,                                                 
  label=\w:{$x_{\n}$}] (b\n) at (\x,\y) {};                                                         
\draw (b1) -- (b2) -- (b3) -- (b4) -- (b5) -- (b1);                                                 
\node[circle,draw,minimum size=2mm,inner sep=0mm,label=180:$z$] (z) at (-1.5,0) {};                 
\foreach \n in {1,2,...,5} 
\draw (b\n) -- (z);                    
\end{scope}

\foreach \xs in {0} {

  \begin{scope}%
    \foreach \x/\y/\n in {2/0/1, 1/1/2, -0.5/1/3,-0.5/-1/4,1/-1/5}
  \node[circle,draw,minimum size=2mm,inner sep=0mm] (b\n) at (\x+\xs,\y) {};
  \draw (b1) -- (b2) -- (b3) -- (b4) -- (b5) -- (b1);
  \node[circle,draw,minimum size=12mm,inner sep=0mm] (z) at (-1.45+\xs,0) {$K_{h+3}$};
  \foreach \n in {1,2,...,5}
  \draw (b\n) -- (z);
  \end{scope}
}
\end{tikzpicture}
\end{center}
\caption{The graphs~$D'$ and $D$. The encircled clique $K_{h+3}$ is the canonical clique of $D$. 
  An edge from a clique to a vertex of a cycle means that each vertex of the clique is joined to this vertex.}\label{fig:D}
\end{figure}

An important property of $D'$ is that for each homomorphism $\phi  \colon D' \to D'$ and $i \in [5]$,
\[ \phi(z) =z \text{ and }\phi(z) \neq \phi(x_i).\]
In words, $z$ is always mapped to $z$ and nothing else is mapped to~$z$. Indeed, because the vertex $z$ is adjacent to all the remaining vertices of~$D'$, we have that $\phi(z) \neq \phi(x_i)$. By the same reason, we have that for every $i\in [5]$, $\phi(x_i)\in N_{D'}(\phi(z))$.  But for every $x_i$ its open neighborhood $N_{D'}(x_i)$ induces a bipartite graph. On the other hand, the chromatic number of the cycle $C=x_1x_2x_3x_4x_5$   is three, and thus it cannot be mapped by $\phi$ to 
$N_{D'}(x_i)$ for any $i\in [5]$. Therefore, $\phi(z) =z$.

In order for the $\phi(z)=z$ argument to work in a bigger graph,
we replace $z$ by a clique $K_{h+3}$ of size $h+3$, called
the {\em canonical clique} of the gadget.
The obtained graph with $(h+3)+5$ vertices is denoted as $D$ (see Fig.~\ref{fig:D}).

Let $D_0, \ldots, D_{k}$ be $k+1$ copies of the graph $D$.
We join those $k+1$ graphs isomorphic to $D$ to construct a larger gadget $T_k$ 
as follows (see Fig.~\ref{fig:T_kh3}).
For each $i \in [k]$, we select an arbitrary vertex from the canonical
clique of $D_i$, denote this vertex as $z_i$, and identify
it with one arbitrary vertex of $D_{i-1}$ which does not belong
to the canonical clique of $D_{i-1}$, i.e., with a vertex
of the $5$-cycle.
and connect every vertex of $K_{h+3}$ to all neighbors of $z$ in the subsequent block. 
We also mark one of $K_{h+3}$'s vertices as $z$ and connect it to the left of it, see Fig.~\ref{fig:T_kh3}. Denote the new graph by $T_{k,h+3}$.
Oberve that each $D_i$ is a block of $T_k$ and we call $D_i$ the $i$th block of $T_k$.
Note that two consecutive blocks $D_{i-1}$ and $D_i$ have exactly one common vertex, namely $z_i$.

The reason we are using those canonical cliques instead of single vertices
in the construction of $T_k$ is that those canonical cliques 
are big enough to behave as anchors.
That is, we will prove that canonical cliques can only be mapped to themselves and
not to other parts of the graph, in particular, for each $i \in [k]$
and homomorphism $\phi \colon T_k \to T_k$, $\phi(z_i)=z_i$.

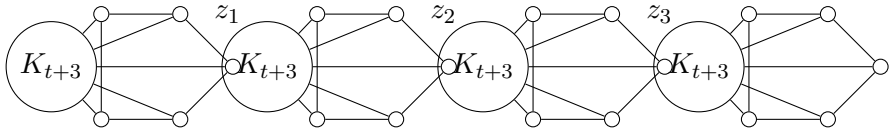
\begin{figure}[ht]
\begin{center}
\begin{tikzpicture}[scale=0.7]
  \foreach \xs in {0, 4.1, 8.2, 12.3} {

  \begin{scope}%
  \foreach \x/\y/\n in {2/0/1, 1/1/2, -0.5/1/3,-0.5/-1/4,1/-1/5}
    \node[circle,draw,minimum size=2mm,inner sep=0mm] (b\n) at (\x+\xs,\y) {};
  \draw (b1) -- (b2) -- (b3) -- (b4) -- (b5) -- (b1);
  \node[circle,draw,minimum size=12mm,inner sep=0mm] (z) at (-1.45+\xs,0) {$K_{t+3}$};
  \foreach \n in {1,2,...,5}
    \draw (b\n) -- (z);
  \end{scope}
  }
  
  \foreach \x/\n in {2/1, 6.1/2, 10.2/3}
    \node[anchor=south,circle,inner sep=2mm] at (\x-.1,.3) {$z_{\n}$};
  
\end{tikzpicture}
\end{center}
\caption{The gadget $T_{k}$ for $k=3$.  }\label{fig:T_kh3}
\end{figure}

{\em Constructing $G'$.}
Let $A_h$ be a graph consisting of a matching with $h$ edges  $\{ a_1b_1,\ldots, a_hb_h\}$.
Then the graph $G'$  consists of a copy of $G$, a copy of $T_{h}$, and a copy of $A_h$ with the following additional edges: the vertex $z_i$ from the $i$th block of $T_{h}$ is adjacent to the vertices $a_i$ and $b_i$. Also we add edges from $G$ to $A_h$: for a vertex $g_i\in G$ we add an edge $g_ia_j$ for every $j$, and an edge $g_ib_j$ if $j\not\in\mathcal{L}(i)$ (see Fig.~\ref{fig:gprime}). The number of vertices
in $G'$ is at most $n+2h+(h+1)(h+3+5) \le n+(h+1)(h+11) \le n+25h^2$.

\begin{figure}[ht]
\begin{center}
\begin{tikzpicture}[scale=0.7]
  \foreach \xs/\n in {0/1, 4.1/2, 8.2/3} {

  \begin{scope}%
  \foreach \x/\y/\n in {2/0/1, 1/1/2, -0.5/1/3,-0.5/-1/4,1/-1/5}
    \node[circle,draw,minimum size=2mm,inner sep=0mm] (b\n) at (\x+\xs,\y) {};
  \draw (b1) -- (b2) -- (b3) -- (b4) -- (b5) -- (b1);
  \node[circle,draw,minimum size=12mm,inner sep=0mm] (z) at (-1.45+\xs,0) {$K_{h+3}$};
  \foreach \n in {1,2,...,5}
    \draw (b\n) -- (z);
    
  \node[circle,draw,minimum size=2mm,inner sep=0mm,label=left:$a_{\n}$] (p\n) at (1.7+\xs,-2) {};
  \node[circle,draw,minimum size=2mm,inner sep=0mm,label=right:$b_{\n}$] (q\n) at (2.3+\xs,-2) {};
  \draw (b1) -- (p\n) -- (q\n) -- (b1);
  \end{scope}
  }
  
  \foreach \xs in {12.3} {

  \begin{scope}%
  \foreach \x/\y/\n in {2/0/1, 1/1/2, -0.5/1/3,-0.5/-1/4,1/-1/5}
    \node[circle,draw,minimum size=2mm,inner sep=0mm] (b\n) at (\x+\xs,\y) {};
  \draw (b1) -- (b2) -- (b3) -- (b4) -- (b5) -- (b1);
  \node[circle,draw,minimum size=12mm,inner sep=0mm] (z) at (-1.45+\xs,0) {$K_{h+3}$};
  \foreach \n in {1,2,...,5}
    \draw (b\n) -- (z);
  \end{scope}
  }
  
  \foreach \x/\n in {2/1, 6.1/2, 10.2/3}
    \node[anchor=south,circle,inner sep=2mm] at (\x,.3) {$z_{\n}$};
    
  \node[circle,draw,minimum size=20mm] (g) at (6.1,-5) {$G$};
  \node[circle,draw,minimum size=2mm,inner sep=0mm,label=left:$i$] (i) at (6.1,-4.4) {};
  
  \draw (i) -- (p1);
  \draw (i) -- (q1);
  \draw (i) -- (p2);
  \draw (i) -- (p3);
 
\end{tikzpicture}
\end{center}
\caption{The graph~$G'$. A vertex $i \in V(G)$ is connected to $b_j$ if and only if $j \not \in \cL(i)$, where $\cL(i)$ is the list associated with the vertex $i\in V(G).$}\label{fig:gprime}
\end{figure}
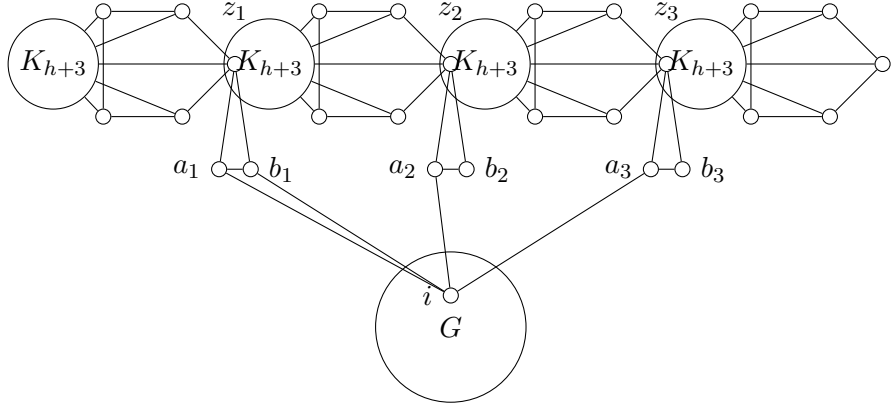

{\em Constructing $H'$.}
The graph $H'$ is constructed similarly. It consists of a copy of $H$, a copy of $T_{h}$, and a copy of $A_h$. For every $i$ we add edges $z_ia_i$ and $z_ib_i$ as before. Also, each vertex $i$ of $H$ is adjacent to all the vertices from $A_h$ except for $b_i$ (see Fig.~\ref{fig:hprime}).
The number of vertices
in $H'$ is at most $h+2h+(h+1)(h+3+5) \le (h+1)(h+11) \le 25h^2$.

\begin{figure}[ht]
\begin{center}
\begin{tikzpicture}[scale=.7]
  \foreach \xs/\n in {0/1, 4.1/2, 8.2/3} {

  \begin{scope}%
  \foreach \x/\y/\n in {2/0/1, 1/1/2, -0.5/1/3,-0.5/-1/4,1/-1/5}
    \node[circle,draw,minimum size=2mm,inner sep=0mm] (b\n) at (\x+\xs,\y) {};
  \draw (b1) -- (b2) -- (b3) -- (b4) -- (b5) -- (b1);
  \node[circle,draw,minimum size=12mm,inner sep=0mm] (z) at (-1.45+\xs,0) {$K_{h+3}$};
  \foreach \n in {1,2,...,5}
    \draw (b\n) -- (z);
    
  \node[circle,draw,minimum size=2mm,inner sep=0mm,label=left:$a_{\n}$] (p\n) at (1.7+\xs,-2) {};
  \node[circle,draw,minimum size=2mm,inner sep=0mm,label=right:$b_{\n}$] (q\n) at (2.3+\xs,-2) {};
  \draw (b1) -- (p\n) -- (q\n) -- (b1);
  \end{scope}
  }
  
  \foreach \xs in {12.3} {

  \begin{scope}%
  \foreach \x/\y/\n in {2/0/1, 1/1/2, -0.5/1/3,-0.5/-1/4,1/-1/5}
    \node[circle,draw,minimum size=2mm,inner sep=0mm] (b\n) at (\x+\xs,\y) {};
  \draw (b1) -- (b2) -- (b3) -- (b4) -- (b5) -- (b1);
  \node[circle,draw,minimum size=12mm,inner sep=0mm] (z) at (-1.45+\xs,0) {$K_{h+3}$};
  \foreach \n in {1,2,...,5}
    \draw (b\n) -- (z);
  \end{scope}
  }
  
  \foreach \x/\n in {2/1, 6.1/2, 10.2/3}
    \node[anchor=south,circle,inner sep=2mm] at (\x,.3) {$z_{\n}$};
    
  \node[circle,draw,minimum size=20mm] (g) at (6.1,-5) {$H$};
  \node[circle,draw,minimum size=2mm,inner sep=0mm,label=left:$i$] (i) at (6.1,-4.4) {};
  
  \draw (i) -- (p1);
  \draw (i) -- (q2);
  \draw (i) -- (p2);
  \draw (i) -- (p3);
  \draw (i) -- (q3);
  
\end{tikzpicture}
\end{center}
\caption{The graph~$H'$. A vertex $i \in V(H)$ is connected to all $a_j$'s and all $b_j$'s except for $b_i$.}
\label{fig:hprime}
\end{figure}
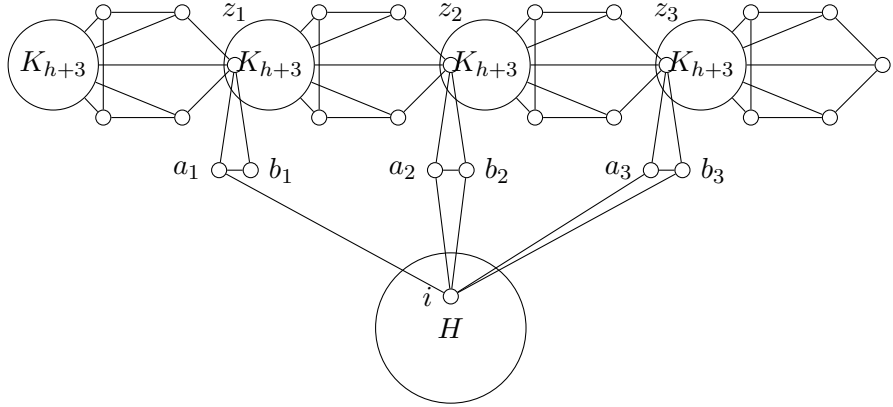

{\em Correctness.} We now turn to prove that the instance $(G,H)$ of LIST-HOM is equisatisfiable to an instance $(G',H')$
of HOM.

\begin{claim}
Any homomorphism $\phi$ from $G'$ to $H'$ maps $T_{h}$ into $T_{h}$.
\end{claim}
\begin{proof}[Proof of the claim]
No pair of vertices of the same clique of $T_{h}$ is mapped to the same vertex in $H'$, because $H'$ has no self-loops. Therefore, canonical cliques from $T_{h}$ 
are mapped to some cliques from $T_{h}$, as $H'$ has no more cliques of size $h+3$. The remaining vertices of $T_{h}$ have at least $h+3$ neighbors from 
canonical cliques, therefore they must be mapped to vertices from $T_{h}$.
\end{proof}

\begin{claim}
Any homomorphism $\phi$ from $G'$ to $H'$ bijectively maps $T_{h}$ to $T_{h}$ so that the order of $z$'s is preserved, i.e., for 
each $i \in [k]$, $\phi(z_i) = z_i$.
\end{claim}
\begin{proof}[Proof of the claim]
\begin{enumerate}
\item {\em Every canonical clique is mapped to a canonical clique.}
First note that a canonical clique is mapped into one block. Indeed, there are no vertices outside of a block that are connected to more than one vertex of the block. Assume, to the contrary, that 
same canonical clique is mapped to one block but not to a canonical clique. Then its image has to contain one or two vertices of the $5$-cycle from that block. If the image contains only one vertex of the $5$-cycle, then the image of the $5$-cycle has at most $3$ vertices: one vertex from the canonical clique $K_{h+3}$, two neigbors of the vertex from the $5$-cycle (because all the vertices of the image of the $5$-cycle must be connected to all the vertices of the image of the clique). Note that these three vertices do not form a triangle, therefore the $5$-cycle cannot be mapped to them. If the image of the clique contains two vertices outside of the canonical clique, then for the same reason the image of the $5$-cycle must contain only two vertices, which is not possible. 
This analysis shows that every canonical clique $K_{h+3}$ must be mapped to a canonical clique $K_{h+3}$.
\item {\em Every block is mapped to a block.}  We already know that every canonical clique is mapped to a canonical clique. The $5$-cycle from the same block must be mapped to the corresponding $5$-cycle, because it is the only image that contains a closed walk of odd length and every vertex of which is connected to the clique (recall that the images of the canonical clique and the $5$-cycle do not intersect, since their preimages are joined by edges). Note that since the canonical clique and the cycle are mapped to themselves, $z_i$ has to be mapped to some $z_j$.
\item {\em If $D_i$ is mapped to $D_j$, then $D_{i+1}$ is mapped to $D_{j+1}$.}
The cycle from $D_i$ shares a vertex with the canonical clique from $D_{i+1}$, therefore if $D_i$ is mapped to $D_j$, then
$D_{i+1}$ can only be mapped to $D_{j+1}$ or $D_j$. However, $D_{i+1}$ cannot be mapped into the same block as $D_i$. 
Indeed, in this case the canonical clique of $D_{i+1}$ would be mapped to the canonical clique of $D_j$,
but we already know that $z_{i+1}$ is mapped to the $5$-cycle of $D_j$.
Therefore, $D_i$ and $D_{i+1}$ must be mapped in consecutive blocks.
\end{enumerate}
The above proves that for every $i \in \{0,\ldots,k\}$, $D_i$ is mapped to $D_i$, which implies that any homomorphism preserves the order of $z$'s.
\end{proof}

\begin{claim}
Any homomorphism $\phi$ from $G'$ to $H'$ maps $A_h$ to $A_h$ so that $a_ib_i$ is mapped to $a_ib_i$.
\end{claim}
\begin{proof}[Proof of the claim]
Every pair $a_ib_i$ is connected to $z_i\in T_{h}$, so it can be mapped either to $a_ib_i$ or to some vertices of $T_{h}$. But in the latter case it would not have paths of length $2$ to all other pairs $a_jb_j$.
\end{proof}

\begin{claim}\label{claim4}
Any homomorphism $\phi$ from $G'$ to $H'$ maps $G$ to $H$.
\end{claim}
\begin{proof}[Proof of the claim]
Assume, to the contrary, that a vertex $g\in V(G)$ is mapped to a vertex $v\in V(T_{h})$ or a vertex $a\in V(A_h)$.
The vertex $g$ is adjacent to at least $h$ vertices from $A_h$, but $v$ and $a$ are adjacent to at most $2$ vertices from $A_h$  (recall that by the previous claim every $a_ib_i$ is mapped to $a_ib_i$).
\end{proof}

Now we show that the two instances are equisatisfiable. Let $\phi$ be a 
list homomorphism from $G$ to $H$. We show that its natural extension $\phi'$
mapping $T_{h}$ to $T_{h}$ and $A_h$ to $A_h$ is a correct homomorphism from $G'$ to $H'$.
This is non-trivial only for edges of $G'$ from $G$ to $A_h$. Consider an edge from
a vertex $i$ of $G$ to a vertex $b_j$. The presence of this edge means that $i$
is not mapped to $j$ by $\phi$. Recall that the $b_j$ is mapped by $\phi$ to $b_j$. This means that the considered edge in $G'$ is mapped to an edge in $H'$ by $\phi'$.

For the reverse direction, let $\phi'$ be a homomorphism from $G'$ to $H'$.
We show that its natural projection is a list homomorphism from $G$ to $H$.
Since $\phi'$ maps $G$ to $H$ (by Claim~\ref{claim4}) it is enough to check that all list constrains are satisfied.
For this, consider a vertex $i$ from $G$ and assume that $j \not \in \mathcal{L}(i)$. Then $\phi'$ does not map $i$ to $j$ as otherwise
there would be no image for one of the edges $g_ia_j$ or $g_ib_j$, where $g_i$ is the $i$th vertex of $G$.

{\em Running time of the reduction.} The reduction clearly takes time polynomial in the input length.
\end{proof}

\section{Lower Bounds}\label{sec:lower}
\subsection{Graph Homomorphism}\label{sec:hom}
We are ready to prove our main result about graph homomorphisms, i.e., Theorem~\ref{main:theorem_homs}.

\begin{theorem}[Theorem~\ref{main:theorem_homs} restated]\label{thm:main}
Let $G$ be an $n$-vertex graph and $H$ be a graph with at most $h:=h(n)$ vertices.
Unless ETH fails, for any constant $D\ge1$ there exists a constant $c=c(D)>0$ such that
for any non-decreasing function $3\le h(n)\le n^D$, 
there is no $\cO\left(h^{cn}\right)$ time algorithm deciding whether there is a homomorphism from $G$ to $H$. 
\end{theorem}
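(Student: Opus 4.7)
The plan is to reduce 3-Coloring on graphs of maximum degree four---ETH-hard by Lemma~\ref{lemma:3col}---to \GH, using the two reductions from Section~\ref{sec:reductions}. Suppose, for contradiction, that an algorithm $\mathcal{A}$ solves \GH in time $\cO(h^{cn}) = \cO(2^{cn\log h(n)})$ for some constant $c = c(D) > 0$ to be fixed later. Given a 3-Coloring instance $G$ on $n$ vertices of maximum degree four, the goal is to invoke $\mathcal{A}$ on a derived instance to decide 3-Coloring on $G$ in time $2^{\cO(cn)}$, which will contradict Lemma~\ref{lemma:3col} provided $c$ is chosen small enough.

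Given a parameter $r = r(n)$, first apply Lemma~\ref{lem:coltolhom} to obtain an equisatisfiable LIST-HOM instance $(G', H')$ with $|V(G')| \le n/r$ and $|V(H')| \le \gamma^r$, where $\gamma := \gamma(4)$. Then apply Lemma~\ref{lemma:lhomtohom} to obtain an equisatisfiable \GH instance $(G'', H'')$ with $N := |V(G'')| \le n/r + 25\gamma^{2r}$ and $M := |V(H'')| \le 25\gamma^{2r}$. Running $\mathcal{A}$ on $(G'', H'')$ decides 3-Coloring on $G$ in time $\cO(h(N)^{cN}) = \cO(2^{cN\log h(N)})$.

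The parameter $r$ must be chosen to simultaneously satisfy (i) $M \le h(N)$, so that the running-time bound of $\mathcal{A}$ applies, and (ii) $N\log h(N) = O_D(n)$, so that for $c = c(D)$ small enough, the total running time stays strictly below $2^{qn}$. I handle this by splitting on the magnitude of $h$. In the \emph{bounded regime}, where $h(n) \le 25\gamma^2$, bypass the reductions entirely: set $H'' := K_3$, which is legal because $h(n) \ge 3$; then $\mathcal{A}$ decides 3-Coloring directly in time $\cO((25\gamma^2)^{cn})$, and choosing $c < q/\log(25\gamma^2)$ contradicts Lemma~\ref{lemma:3col}. In the \emph{growing regime}, where $h(n) > 25\gamma^2$, define $r \ge 1$ as the largest positive integer with $25\gamma^{2r} \le h(\lceil n/r \rceil)$; the value $r = 1$ is always feasible, so such $r$ exists. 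Since $h$ is non-decreasing and $N \ge n/r$, this choice ensures $M \le h(n/r) \le h(N)$, verifying (i). The upper bound $h(n) \le n^D$ together with $25\gamma^{2r} \le h(n/r) \le (n/r)^D$ forces $r = O_D(\log n)$, which in turn keeps $r$ well within the $o(\sqrt{n})$ regime required by Lemma~\ref{lem:coltolhom}; together with $\log h(N) \le D\log N$, this yields $N\log h(N) = O_D(n)$ after a direct computation that balances $n/r$ against $25\gamma^{2r}$.

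The main obstacle is ensuring (i) and (ii) simultaneously across the full spectrum $3 \le h(n) \le n^D$: the constraint $M \le h(N)$ pushes $r$ to be small when $h$ grows slowly, while minimizing $N\log h(N)$ favors $r$ as large as possible. The case split based on the magnitude of $h$, together with the maximality definition of $r$ in the growing regime, resolves this tension uniformly in $h$. The final constant $c(D)$ is obtained as the minimum of $q/\log(25\gamma^2)$ and $q/K$, where $K = K(D)$ is the hidden constant in the $O_D(n)$ bound on $N\log h(N)$; with this choice of $c$, running $\mathcal{A}$ on $(G'', H'')$ yields a 3-Coloring algorithm of running time strictly less than $2^{qn}$, contradicting Lemma~\ref{lemma:3col}.
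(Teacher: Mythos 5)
You follow the same pipeline as the paper (3-Coloring on degree-4 graphs $\to$ LIST-HOM via Lemma~\ref{lem:coltolhom} $\to$ HOM via Lemma~\ref{lemma:lhomtohom}, with a case split on the magnitude of $h$), and your bounded regime is fine. The gap is in your choice of $r$ in the growing regime, and it is fatal for $D\ge 2$.

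You define $r$ as the \emph{largest} integer with $25\gamma^{2r}\le h(\lceil n/r\rceil)$ and then assert that ``a direct computation that balances $n/r$ against $25\gamma^{2r}$'' gives $N\log h(N)=O_D(n)$. But maximality pushes $25\gamma^{2r}$ up to essentially $h(\lceil n/r\rceil)$ (within a $\gamma^2$ factor), and for $h(m)=m^D$ with $D\ge 2$ this is $\Theta\bigl((n/r)^D\bigr)\gg n/r$. Since $|V(H'')|\le 25\gamma^{2r}$ and Lemma~\ref{lemma:lhomtohom} also adds a gadget of $\Theta(|V(H')|^2)$ vertices to $G''$, the size of $G''$ is dominated by the gadget: $N=\Theta\bigl((n/r)^D\bigr)$. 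With $r=O_D(\log n)$, this is $\Omega\bigl((n/\log n)^D\bigr)$, and $N\log h(N)=\Omega\bigl((n/\log n)^D\cdot\log n\bigr)$, which is $\omega(n)$ for $D\ge 2$. The derived 3-Coloring algorithm then runs in time $2^{\omega(n)}$ for every constant $c>0$, so no ETH contradiction follows. In short, the constraint you maximize over balances the wrong pair of quantities; for $D>1$ it lets the gadget dwarf the grouped graph.

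The paper instead ties $r$ to $\log h$ with the $D$-dependence in the exponent: $r=\frac{\log h}{4D\log\gamma}$, so $25\gamma^{2r}=25h^{1/(2D)}\le 25\sqrt n$. This keeps the gadget size $O(\sqrt n)$, negligible against $n/r=\Theta(n/\log n)$, giving $N=O(n/r)$ and $N\log h(N)=O\bigl((n/r)\cdot Dr\log\gamma\bigr)=O_D(n)$. To repair your argument you would need the maximality criterion to incorporate the $2D$ factor, e.g.\ take $r$ largest with $(25\gamma^{2r})^{2D}\le h(\lceil n/r\rceil)$, which reproduces the paper's balance; your current criterion omits the exponent $2D$ and is therefore off by that factor in $\log\gamma^{r}$, exactly the slack that breaks the bound when $D>1$.
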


\begin{proof}
The outline of the proof of the theorem is as follows. Assuming that there is a ``fast" algorithm for \GH, we show that there is  also a  ``fast"  algorithm solving \LGH, which, in turn, implies  ``fast" algorithm for \textsc{3-Coloring} on degree $4$ graphs, contradicting ETH. In what follows, we specify what we mean by ``fast".

Let $h_0=25^2$. If $h(n)<h_0$ for all values of $n$, then an algorithm with running time $\cO\left(h^{cn}\right)$ would solve \textsc{3-Coloring} in time $\cO\left(h_0^{cn}\right)=\cO\left(2^{cn\log{h_0}}\right)$ (recall that $h(n) \ge 3$). Therefore, by choosing a small enough constant $c$ such that $c\log{h_0} < q$, we arrive to a contradiction with Lemma~\ref{lemma:3col}.

From now on we assume that $h(n)\ge h_0$ for large enough values of $n$.
Let $c=\frac{q}{8D\log{\gamma}}$, where $q$ is the constant from Lemma~\ref{lemma:3col}, and $\gamma:=\gamma(4)$ is the constant from Lemma~\ref{lem:coltolhom}. For 
  the sake of contradiction, let us assume that there exists an algorithm $\cal A$ deciding whether  $G\to H$ in time $\cO(h^{cn})=\cO(2^{cn\log{h}})$, where $|V(G)|=n, |V(H)|=h:=h(n)$. Now we show how to solve $3$-coloring on $n'$-vertex graphs of maximum degree four in time $2^{q n'}$, which would contradict  Lemma~\ref{lemma:3col}.
 
Let $G'$ be an $n'$-vertex graph of maximum degree four that needs to be $3$-colored. 
Let $r=\frac{\log{h}}{4D\log{\gamma}}$ and $n=\frac{2n'}{r}$. 
Using Lemma~\ref{lem:coltolhom} (note that $r=o(\sqrt{n'})$ as required) we construct an instance $(G_1,H_1)$ of \LGH{} that is satisfiable if and only if the initial graph $G'$ is $3$-colorable, and $|V(G_1)|\le\frac{n'}{r}, |V(H_1)|\le\gamma^r$. By Lemma~\ref{lemma:lhomtohom}, this instance is equisatisfiable to an instance $(G,H)$ of \GH{} where $|V(H)|<25\gamma^{2r}= 25h^{\frac{1}{2D}}\le h$ (since $D\ge1$ and $h(n)\ge h_0$), and 
\[|V(G)| \le \frac{n'}{r}+25\gamma^{2r} \le \frac{n}{2}+25h^{\frac{1}{2D}}\le \frac{n}{2}+25\sqrt{n}\le n  \]
(for sufficiently large values of~$n$).

Now, in order to solve $3$-coloring for $G'$, we construct an instance $(G,H)$ with $|V(G)|\le n$ and $|V(H)|\le h$ of 
\GH \,  and 
invoke the algorithm $\cal A$ on this instance.
 The running time of $\cal A$ is
\[
\cO( 2^{cn\log{h}}) =
\cO(2^{\frac{2cn'}{r}\log{h}} )=
 \cO(2^{2cn'\log{h}\cdot\frac{4D\log{\gamma}}{\log{h}}} )=
 \cO(2^{8cDn'\log{\gamma}} )=
 \cO(2^{q n'})
 \, \]
and hence we can find a 3-coloring of $G'$ in time $\cO(2^{q n'})$, which contradicts ETH (see Lemma~\ref{lemma:3col}). 
\end{proof}

\begin{theorem}[Theorem~\ref{main:theorem_local_homs} restated]
Let $G$ be an $n$-vertex graph $G$ and $H$ be a graph with at most $h:=h(n)$ vertices.
Unless ETH fails, for any constant $D\ge1$ there exists a constant $c=c(D)>0$ such that
for any non-decreasing function
$3\le h(n)\le n^D$, there is no $\cO\left(h^{cn}\right)$ time algorithm deciding whether there is a locally injective homomorphism from $G$ to $H$. 
\end{theorem}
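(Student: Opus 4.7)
The plan is to follow the proof of Theorem~\ref{main:theorem_homs} almost verbatim, with a small modification to the gadget reduction of Lemma~\ref{lemma:lhomtohom} to force any valid homomorphism to be locally injective. The overall pipeline---from a $3$-coloring instance $G'$ of maximum degree four to a \LGH instance $(G_1,H_1)$ via Lemma~\ref{lem:coltolhom} (with the same choice $r = \log h / (4D \log \gamma)$), and thence to a \LIGH instance $(G,H)$ via a modified Lemma~\ref{lemma:lhomtohom}---remains intact, and the ETH-contradiction calculation (comparing $\cO(h^{cn})$ against the $\cOs(2^{qn'})$ lower bound from Lemma~\ref{lemma:3col}) transfers verbatim.

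Two facts drive the adaptation. First, the natural homomorphism from $G_1$ to $H_1$ built in the forward direction of Lemma~\ref{lem:coltolhom} is already locally injective: the coloring $c$ is proper on $(G_1)^2$ by Lemma~\ref{lem:group}, so any two buckets sharing a common neighbor in $G_1$ carry distinct labels, and labels form the second coordinate of vertices of $H_1$. Second, Claims~1--3 in the proof of Lemma~\ref{lemma:lhomtohom} already force the gadget part of $G$---the subgraphs $T_h$ and $A_h$---to be mapped bijectively onto their $H$-counterparts with the ordering of $z_i$'s preserved, so local injectivity on these gadget portions is automatic.

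The only residual source of failure is at vertices of $A_h$: in the original construction each $a_j$ is adjacent to every vertex of $V(G_1) \subseteq V(G)$, so two distinct $g_k,g_{k'} \in V(G_1)$ mapped to the same vertex of $V(H_1)$ would collide in the neighborhood of $a_j$. The tiny modification is to re-route the list-enforcement edges between $V(G_1)$ and $A_h$ (together with a mirrored modification on the $H$-side) through private auxiliary attachments, so that no two distinct vertices of $V(G_1)$ share a common neighbor in the new $G$ other than those already inherited from $G_1$. With this in place, local injectivity of the full homomorphism reduces to local injectivity of the $G_1 \to H_1$ portion, which already holds; conversely, any locally injective $G \to H$ still yields a list homomorphism $G_1 \to H_1$ by the analogue of Claim~\ref{claim4}, so equisatisfiability with $3$-coloring is preserved.

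The main obstacle will be designing the auxiliary structure so that (i) it is faithful---no homomorphism can route it to an unintended target in $H$, preserving the structural analysis of Lemma~\ref{lemma:lhomtohom} (canonical cliques to canonical cliques, order preservation of the $z_i$'s, list-constraint enforcement)---and (ii) its total size inflation keeps $|V(G)| \le n$ and $|V(H)| \le h$ under the same parameter choice $r = \log h / (4D \log \gamma)$, so that the constant $c=c(D)$ can be chosen exactly as in the proof of Theorem~\ref{main:theorem_homs}. The naive choice of a fully private gadget per bucket inflates $|V(G)|$ beyond what the ETH calculation absorbs, so the delicate part is finding an auxiliary gadget with $O(1)$ amortized overhead per list-enforcement edge while still being faithful; once this is done the rest of the proof is a line-by-line transcription of Theorem~\ref{main:theorem_homs}.
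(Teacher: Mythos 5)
The paper's proof makes no modification to the gadgets of Lemma~\ref{lemma:lhomtohom}: it argues that the same construction, fed the instance from Lemma~\ref{lemma:3coltolisthom}, already transports locally injective solutions in both directions, resting on the two observations that the Lemma~\ref{lemma:3coltolisthom} list homomorphism is automatically locally injective and that $T_h$ and $A_h$ are mapped identically. Your proposal diverges at exactly this point: you claim the unmodified construction can fail at the $A_h$ vertices and propose a rewiring. Your diagnosis is in fact a sharper reading of the situation than the paper's terse treatment. In $G'$ the vertex $a_j$ is adjacent to \emph{all} of $V(G)$, so local injectivity of the assembled map at $a_j$ forces the restriction $\phi\vert_{V(G)}$ to be \emph{globally} injective, which is strictly stronger than $\phi$ being locally injective as a map $G\to H$. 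And the natural list homomorphism produced by Lemma~\ref{lemma:3coltolisthom} need not be globally injective: two distinct buckets of the same label with identical patterns of neighboring labels collide whenever the underlying $3$-coloring agrees on the relevant vertices (e.g.\ two degree-one singleton buckets of the same label whose unique neighbors lie in buckets of the same label). The paper's sentence ``for vertices of these structures, every homomorphism is locally injective'' speaks only to the identity mapping on $T_h$ and $A_h$, not to the $V(G)$-portion of $N(a_j)$, and does not engage this issue.

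However, the constructive half of your proposal is missing, and the obvious instantiation would fail. Routing each edge $g_i a_j$ through a fresh private degree-two intermediate vertex destroys the very constraints those edges carry: a length-two path through a private vertex imposes essentially no homomorphism restriction, so both Claim~\ref{claim4} (which pins $\phi(g_i)\in V(H)$ by counting $g_i$'s neighbors in $A_h$) and the list-enforcement step (which needs $g_i$ simultaneously adjacent to $a_j$ and $b_j$ when $j\notin\cL(i)$) collapse. You yourself flag that a per-bucket private gadget blows the ETH size budget and leave ``the delicate part'' --- an $O(1)$-amortized-overhead replacement that is still faithful to the structural analysis of Lemma~\ref{lemma:lhomtohom} --- entirely open. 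The most natural alternative repair, adding a per-bucket index coordinate to $H$ in Lemma~\ref{lemma:3coltolisthom} so that the lists become pairwise disjoint and any list homomorphism is forced to be injective, also overshoots the size bound $|V(H)|\le\gamma^r$, since the number of buckets per label can be polynomial in $n'$ while $r$ (and hence $\gamma^r$) may be as small as a constant in the parameter regime of Theorem~\ref{main:theorem_homs}. In short, you have correctly located where the paper's own argument is thin, but the missing gadget is a genuine unresolved step of your proposal rather than a routine modification.
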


\begin{proof}
The proof is almost identical to the proof of Theorem~\ref{thm:main}.

Let us  observe that in the reduction  
in Lemma~\ref{lemma:3coltolisthom},  in graph $G'$, we take a coloring (in the proof we refer to such coloring as to labeling) of the square of $G'$. Thus 
for every bucket $v$ of $G'$, all its neighbors are  labeled by different colors. The way we construct the lists, 
only buckets with the same labels can be mapped to the same vertex of $H'$. Thus 
for every vertex $v$ of $G'$, no pair of its neighbors can be mapped to the same vertex. Hence every list homomorphism from $G'$ to $H'$ is locally injective. Therefore the result of Lemma~\ref{lemma:3coltolisthom} holds for locally injective list homomorphisms as well and we obtain the following lemma.

\begin{lemma}\label{lemma:3-local-col-list}
There exists an algorithm that takes as input a graph $G$ on $n$ vertices of maximum degree $d$ 
that needs to be $3$-colored and an integer $r=o(\sqrt{n})$ and finds an equisatisfiable instance $(G',H')$ of \LIGH, where
$|V(G')|\le n/r$  and  $|V(H')| \le \gamma(d)^{r}$, where $\gamma(d)$ is a function of the graph degree. The running time of the algorithm is polynomial
in $n$ and the size of the output graphs.\end{lemma}

In the reduction of  Lemma~\ref{lemma:lhomtohom}, we established that every homomorphism from $G'$ to $H'$
maps   $T_{h}$ to $T_{h}$ and $A_h$ to $A_h$ so that $a_ib_i$ is mapped to $a_ib_i$. 
Thus for vertices of these structures, every homomorphism is locally injective. 
By Claim~\ref{claim4}, 
any homomorphism $\phi$ from $G'$ to $H'$ maps $G$ to $H$. Therefore there  is a locally injective homomorphism from  $G'$ to $H'$ if and only if there is a  locally injective list homomorphism from $G$ to $H$.  Then by making use of Lemma~\ref{lemma:3-local-col-list},  the calculations performed in the proof of Theorem~\ref{thm:main} we conclude with the proof of the theorem.
\end{proof}

\subsection{Subgraph Isomorphism}\label{sec:subiso}

To prove a lower bound for \SI we need a   reduction,
which given an instance of \GH 
produces a single exponential number of instances of \SI.
Even though from the perspective of polynomial time algorithms
such a reduction gives no implication in terms of which problem is harder,
in our setting it is enough to obtain a lower bound for \SI.

\begin{theorem}\label{thm:homtoiso}
Given an instance $(G, H)$ of \GH{} one can in $\operatorname{poly}(n)2^n$ time create $2^n$ instances of \SI{} with $n$ vertices, where $n = |V(G)|+|V(H)|$, such that $(G, H)$ is a yes-instance if and only if at least one of the created instances of \SI{}  is a yes-instance.
\end{theorem}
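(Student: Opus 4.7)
The plan is to exploit the correspondence between homomorphisms $\varphi \colon G \to H$ and injections from $G$ into a blowup of $H$. Since $H$ is loopless, each preimage class $\varphi^{-1}(v)$ is an independent set of $G$. The sizes $s_v := |\varphi^{-1}(v)|$ are nonnegative integers summing to $g := |V(G)|$, so $(s_v)_{v \in V(H)}$ is a composition of $g$ into $h := |V(H)|$ parts. By stars and bars there are $\binom{g+h-1}{h-1} \leq \binom{g+h}{h} \leq 2^{g+h-1} < 2^n$ such compositions, which gives us the desired budget of $2^n$ instances.

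For each composition $\vec s = (s_v)_{v \in V(H)}$ with $\sum_v s_v = g$, I would build the blowup $H[\vec s]$: take the disjoint union of vertex sets $V_v$ with $|V_v| = s_v$, and make $V_u$ and $V_v$ completely bipartite whenever $uv \in E(H)$, with no edges inside any $V_v$. This graph has $g$ vertices; pad it with $h$ isolated vertices to obtain $X_{\vec s}$ on $n$ vertices, and pad $G$ with $h$ isolated vertices to obtain $G^*$ on $n$ vertices. The output instance is the pair $(G^*, X_{\vec s})$. Since there are at most $2^{n-1}$ meaningful choices of $\vec s$, the remaining instances (to reach the prescribed count $2^n$) are filled with a trivial no-instance, e.g.\ a single edge as pattern and an edgeless graph as host.

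For correctness in the forward direction, given a homomorphism $\varphi \colon G \to H$ with $s_v = |\varphi^{-1}(v)|$, choose an arbitrary bijection $\varphi^{-1}(v) \to V_v$ for each $v \in V(H)$; concatenating gives an injection $\psi \colon V(G) \to V(H[\vec s])$. For every $uu' \in E(G)$, letting $v = \varphi(u)$ and $v' = \varphi(u')$, we have $vv' \in E(H)$ and therefore $\psi(u)\psi(u') \in E(H[\vec s]) \subseteq E(X_{\vec s})$. Extending $\psi$ by any injection from the padding of $G^*$ into the padding of $X_{\vec s}$ yields a subgraph embedding. Conversely, a subgraph embedding $\psi \colon G^* \hookrightarrow X_{\vec s}$ must send every non-isolated vertex of $G$ into the blowup part of $X_{\vec s}$, because the padding in $X_{\vec s}$ is isolated. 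Defining $\varphi(u) = v$ whenever $\psi(u) \in V_v$ (and $\varphi(u)$ arbitrary otherwise, which can only happen for isolated $u$) produces a homomorphism $G \to H$, since every edge $uu' \in E(G)$ has $\psi(u) \in V_v$, $\psi(u') \in V_{v'}$ with $vv' \in E(H)$.

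The reduction runs in time $\operatorname{poly}(n)$ per instance, so $\operatorname{poly}(n) \cdot 2^n$ overall. The only step requiring care is the backward direction, where one must verify that no subgraph embedding can smuggle a vertex of $G$ incident to an edge into the isolated padding of $X_{\vec s}$; this is immediate since padding vertices have no incident edges in $X_{\vec s}$. I do not anticipate a real obstacle here beyond bookkeeping: the counting bound $\binom{g+h-1}{h-1} \leq 2^n$ and the loopless assumption on $H$ together make the construction essentially forced.
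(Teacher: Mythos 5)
Your proof is correct and follows essentially the same route as the paper: enumerate the possible preimage sizes $(|\varphi^{-1}(v)|)_{v \in V(H)}$ as compositions of $|V(G)|$ into $|V(H)|$ parts (bounded by $\binom{|V(G)|+|V(H)|-1}{|V(H)|-1} \le 2^n$ via stars-and-bars), blow up $H$ by replicating each vertex $v$ into an independent set of the guessed size, and test whether $G$ embeds as a subgraph. The only cosmetic difference is that you additionally pad both graphs with isolated vertices to reach exactly $n$ vertices each, whereas the paper leaves $G'=G$ and $H'$ with $|V(G)|$ vertices apiece; this changes nothing in the correctness argument.
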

\begin{proof}
Let $(G,H)$ be an instance of \GH{} and let $n = V(G) + V(H)$.
Note that any homomorphism $h$ from $G$ to $H$
can be associated with some sequence of non-negative
numbers $(|h^{-1}(v)|)_{v \in V(H)}$, being
the numbers of vertices of $G$ mapped to particular 
vertices of $H$.
The sum of the numbers in such a sequence equals exactly $|V(G)|$.
As the number of such sequences is $\binom{V(G) + V(H)-1}{V(H)-1} \le 2^{n}$,
we can enumerate all such sequences in time $2^n\operatorname{poly}(n)$.
For each such sequence $(a_v)_{v \in V(H)}$
we create a new instance $(G',H')$ of \SI,
where the pattern graph remains the same, i.e., $G' = G$,
and in the host graph $H'$ each vertex of $v \in V(H)$
is replicated exactly $a_v$ times (possibly zero).
Observe that $|V(H')| = |V(G')|$.

We claim that $G$ admits a homomorphism to $H$ if and only if
for some sequence $(a_v)_{v \in V(H)}$ the
graph $G'$ is a subgraph of $H'$.
First, assume that $G$ admits a homomorphism $h$ to $H$.
Consider the instance $(G',H')$ created
for the sequence $a_v = |h^{-1}(v)|$
and observe that we can create a bijection $h' : V(G') \to V(H')$
by assigning $v \in V(G')$ to its private copy of $h(v)$.
As $h$ is a homomorphism, so is $h'$, and as $h'$ is at the same
time a bijection, we infer that $G'$ is a subgraph of $H'$.

On the other hand if for some sequence $(a_v)_{v\in V(H)}$
the constructed graph $G'$ is a subgraph of $H'$,
then projecting the witnessing injection $g : V(G') \to V(H')$
so that $g'(v)$ is defined as the prototype of the copy $g(v)$
gives a homomorphism from $G$ to $H$,
as copies of each $v \in V(H)$ form independent sets in $H'$.
\end{proof}

 Combining Theorem~\ref{thm:main} with Theorem~\ref{thm:homtoiso}, we immediately obtain the following lower bound.
\begin{theorem}\label{thmSI}
Unless ETH fails, there exists a constant $c>0$ such that there is no algorithm deciding whether a given $n$-vertex graph $G$ contains a subgraph isomorphic to a given $n$-vertex graph $H$ in time 
$\cO\left(n^{cn}\right)$.
\end{theorem}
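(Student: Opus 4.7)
The plan is to derive Theorem~\ref{thmSI} as a direct consequence of Theorem~\ref{thm:main} via the exponential-time reduction of Theorem~\ref{thm:homtoiso}. By contradiction, suppose that for every constant $c>0$ there exists an algorithm $\mathcal{B}_c$ solving \SI{} on instances with two $n$-vertex graphs in time $\cO(n^{cn})$. I will use this to build a GH algorithm that beats the bound guaranteed by Theorem~\ref{thm:main} in the regime $D=1$, $h(n)=n$, contradicting ETH.

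The setup is as follows. Let $c' = c(1) > 0$ be the constant supplied by Theorem~\ref{thm:main} when $D=1$, so that (unless ETH fails) no algorithm decides \GH{} on instances with $|V(G)|=n_G$ and $|V(H)|=n_G$ in time $\cO(n_G^{c' n_G})$. Pick any constant $c$ with $0 < c < c'$; by the contradiction hypothesis, we have an SI algorithm $\mathcal{B}_c$ with the corresponding running time. Given a GH instance $(G,H)$ with $|V(G)|=|V(H)|=n_G$, I apply Theorem~\ref{thm:homtoiso} to produce, in time $\operatorname{poly}(n_G)\cdot 2^{2n_G}$, a family of at most $2^{2n_G}$ equivalent SI instances, each of which consists of a pattern graph and a host graph of exactly $n_G$ vertices (since $|V(H')|=|V(G')|=|V(G)|$ in the reduction). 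I then run $\mathcal{B}_c$ on every such SI instance, accepting iff any of them accepts; by Theorem~\ref{thm:homtoiso}, this correctly decides the input \GH{} instance.

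The running time analysis is the only calculation to do. Each call to $\mathcal{B}_c$ costs $\cO(n_G^{c n_G}) = \cO(2^{c n_G \log n_G})$, and there are $2^{2n_G}$ calls, so the total time is
\[
\cO\!\left(2^{2 n_G} \cdot 2^{c n_G \log n_G}\right) \;=\; \cO\!\left(2^{c n_G \log n_G \,+\, 2 n_G}\right).
\]
For all sufficiently large $n_G$ the additive $2 n_G$ is dominated by $(c'-c)\, n_G \log n_G$, hence the running time is $\cO(2^{c' n_G \log n_G}) = \cO(n_G^{c' n_G})$, contradicting Theorem~\ref{thm:main}. This finishes the proof with the constant $c$ being any value strictly less than $c'(1)$.

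\textbf{Main obstacle.} There is essentially no hard step; the two prior results are tailored exactly to fit together. The only point requiring care is matching parameters: the reduction of Theorem~\ref{thm:homtoiso} incurs a $2^{|V(G)|+|V(H)|}$ multiplicative overhead, so I must invoke the GH hardness in a regime where $|V(H)|$ is only linear in $|V(G)|$ (hence the choice $D=1$, $h(n)=n$); otherwise a super-linear $h$ would make the $2^{h}$ factor eat up the gap between $c$ and $c'$. Choosing $h(n)=n$ ensures that the $2^{2n_G}$ overhead contributes only an additive $O(n_G)$ in the exponent, which is absorbed by any slack $c' - c > 0$ in the exponent $c \, n_G \log n_G$, yielding the desired contradiction.
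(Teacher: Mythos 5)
Your proof is correct and takes essentially the same approach as the paper, which simply states that Theorem~\ref{thmSI} follows ``immediately'' from combining Theorem~\ref{thm:main} with Theorem~\ref{thm:homtoiso}; you have spelled out the bookkeeping (instantiating $h(n)=n$, counting the $2^{2n_G}$ reduction overhead, and absorbing the additive $2n_G$ into the slack $(c'-c)n_G\log n_G$) that the paper leaves implicit. One tiny nit: Theorem~\ref{thm:main} with $h(n)=n$ forbids fast algorithms for instances with $|V(H)|\le n$ (not only $|V(H)|=n$), but this is immaterial since your reduction handles arbitrary $|V(H)|\le|V(G)|$ anyway and still yields SI instances on $|V(G)|$-vertex pattern and host graphs.
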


\section{Conclusion and Open Problems}\label{sec:conop}
In this work we resolved a number of  questions about exact exponential algorithms. Our lower bounds suggest   several directions for further research.  
\paragraph{``Fine-grained" dichotomy}
The classical results of ~\citeN{HellN90-On} establishes the following dichotomy for \GH  subject to  P$\neq$ NP  :  For every fixed simple graph $H$,
the problem whether there exists a homomorphism from $G$ to $H$  is solvable in
polynomial time if and only if $H$ is bipartite. 
Is there anything similar to that in the world of exponential algorithms for  HOM$(G,H)$? 

More precisely,  for graph classes $\mathcal{G}$ and $\mathcal{H}$ we denote
by HOM$(\mathcal{G}, \mathcal{H})$ the restriction of the graph homomorphism problem
to input graphs $G\in \mathcal{G}$ and $H\in\mathcal{H}$. If $\mathcal{G}$ or $\mathcal{H}$ is the class of all graphs then we use the placeholder `$\_$' instead of a letter.
Thus the result of  Hell-Ne\v set\v ril states that unless P$\neq$NP, HOM$(\_,\mathcal{H})$ is in P if and only if 
$\mathcal{H}$ is a class of bipartite graphs.

 Now we know that
solving   HOM$(\_,\_)$ with input graphs $G$ and $H$ in time 
$|V(H)|^{o(|V(G)|)}$ would refute ETH. On the other hand, when $ \mathcal{H}$ is the class of graphs consisting of complete graphs, HOM$(\_,\mathcal{H})$ is equivalent to computing the chromatic number of   $G$ and thus is solvable in time $\cO(2^{|V(G)|})$ \cite{BjorklundHK2009-Se}. 
More generally, let  $ \mathcal{H}$ be a graph class  such that for some constant $t$, either the clique-width or the maximum vertex degree of the core of every graph in 
$  \mathcal{H}$ is at most $t$.  \citeN{Wahlst10} have shown that  in this case 
HOM$(\_,\mathcal{H})$ is solvable in single-exponential time  $\cO(f(t)^{|V(G)|})=2^{\cO(|V(G)|)}$, where $f$ is some function of $\mathcal{H}$ only.  Is it possible  to  characterize (up to some complexity assumption)    graph classes $\mathcal{H}$, where HOM$(\_,\mathcal{H})$ is solvable in single-exponential time?

% So far the only running times for HOM$(\_,\mathcal{H})$ we know are either polynomial, single-exponential or of the form $|V(H)|^{\cO(|V(G)|)}$.  Is it the only option? For example, can it be that for some hereditary graph class  $\mathcal{H}$,  the  correct bound on the time required to solve HOM$(\_,\mathcal{H})$  is 
 %$|V(H)|^{\cO(|V(G)|)/\log\log{|V(H)|}}$?

What about the fine-grained complexity of \GH   for 
HOM$(\mathcal{G}, \_)$ and   HOM$(\mathcal{G},\mathcal{H})$? 
Of course, similar questions are interesting for \SI,  as well as for 
 counting versions of \GH and \SI.

\paragraph{Some concrete problems}   Are the following problems solvable in single-exponential time?
 \begin{itemize}
 \item   \SI  with instance $(G,H)$  when the maximum vertex degree of $G$ is $3$. (When degree of $G$ does not exceed $2$, the problem is solvable in single-exponential time, see e.g. \cite{HeldKarp62}.)
\item 
  Deciding if graph $G$ can be obtained from graph $H$ only by edge-contractions.
  \item 
 Deciding if graph $G$ is an immersion of graph $H$.
 \item Deciding if  $G$ is a minor of a graph $H$ for the special situation when $G$ is a clique.
 \item  Finding a minimum distortion embedding into a cycle. We remark  that  embedding in a path can be done in time $2^{\cO(|V(G)|)}$ \cite{bw,FominLS11}.
 
 \end{itemize}

\section*{Acknowledgement} 
We thank Gregory Gutin for pointing us to QAP, and the anonymous reviewers for helpful comments.
\bibliographystyle{plainnat}
\bibliography{hom,book_pc,subiso}

\end{document}